\pgfplotsset{compat=1.17}
\newtheorem{theorem}{Theorem}[section]
\newtheorem{proposition}{Proposition}[section]
\newtheorem{remark}{Remark}[section]
\newtheorem{corollary}{Corollary}[section]
\DeclareMathOperator*{\minmod}{minmod}
\DeclareMathOperator*{\sech}{sech}
\newcommand{\dd}{\mathrm{d}}
\providecommand{\keywords}[1]
{ \small \textbf{\textit{Keywords---}}#1 }
\title{  A Direct Approach for Detection of Bottom Topography in Shallow Water}
\author[$\;$]{}
\affil[$\;$]{ {\large Noureddine Lamsahel}}
\affil[$\;$]{ {\scriptsize  University of Littoral C\^ote d'Opale, Laboratory of Pure and Applied Mathematics, 50 Rue F. Buisson, 62228 Calais-Cedex, France.}}
\affil[$\;$]{ {\scriptsize  Mohammed VI Polytechnic University, The UM6P Vanguard Center, Benguerir 43150, Lot 660, Hay Moulay Rachid, Morocco.}}
\affil[$\;$]{ {\scriptsize Emails: noureddine.lamsahel@um6p.ma, noureddine.lamsahel@etu.univ-littoral.fr  }}
\affil[$\;$]{ {\large }}
\affil[$\;$]{ {Carole Rosier}}
\affil[$\;$]{ {\scriptsize  University of Littoral C\^ote d'Opale, Laboratory of Pure and Applied Mathematics, 50 Rue F. Buisson, 62228 Calais-Cedex, France.}}
\affil[$\;$]{ {\scriptsize Email: carole.rosier@univ-littoral.fr}}
\affil[$\;$]{ {\large }}
\date{}
\begin{document}

\maketitle

\begin{abstract}
We propose a fast, stable, and direct analytic method to detect underwater channel topography from surface wave measurements, based on one-dimensional shallow water equations. The technique requires knowledge of the free surface and its first two time derivatives at a single instant $t^{\star}$ above the fixed, bounded open segment of the domain. We first restructure the forward shallow water equations to obtain an inverse model in which the bottom profile is the only unknown, and then discretize this model using a second-order finite-difference scheme to infer the floor topography.  We demonstrate that the approach satisfies a Lipschitz stability and is independent of the initial conditions of the forward problem. The well-posedness of this inverse model requires that, at the chosen measurement time $t^{\star}$, the discharge be strictly positive across the fixed portion of the open channel, which is automatically satisfied for steady and supercritical flows.  For unsteady subcritical and transcritical flows, we derive two empirically validated sufficient conditions ensuring strict positivity after a sufficiently large time. The proposed methodology is tested on a range of scenarios, including classical benchmarks and different types of inlet discharges and bathymetries. We find that this analytic approach yields high approximation accuracy and that the bed profile reconstruction is stable under noise. In addition, the sufficient conditions are met across all tests.
\end{abstract}

\noindent\keywords{Bathymetry, Inverse problem, Shallow water equations, Free surface measurements, Direct inverse approach, Analytic reconstruction.}

\section{Introduction}
Accurate knowledge of underwater geometry in rivers and coastal zones is critical for coastal engineering applications, including the study of open channel flow hydrodynamics \cite{marks2000integration,cunge1980practical} and the prediction of tsunami run-up in diverse settings \cite{synolakis1987runup,synolakis1991green,grilli1994shoaling}. This prediction direction is referred to as the direct problem and, importantly, utilizes bathymetry information to explore its impact on surface wave deformation, as well as changes in velocity and discharge across the domain. The selection of an appropriate forward model depends on the physical context and flow characteristics. For inviscid flows, we can employ the full incompressible Euler system or adopt its asymptotic simplification, including the Saint‑Venant (shallow water) equations \cite{cunge1980practical,synolakis2008validation}.

Detecting the bottom profile with good accuracy is a challenging problem that often requires direct measurement techniques, which are time-consuming and costly. One of the primary approaches uses devices based on acoustic waves \cite{smith2004conventional}. However, this technique is slow, incurs high operational costs, and has a risk of stranding \cite{sagawa2019satellite,carron2001proposed}. As a second technique, in \cite{irish1999scanning,hilldale2008assessing} the authors propose collecting bed topography measurements using airborne LiDAR. Although faster than sonar surveys, this approach remains costly and is limited by water clarity and flight restrictions \cite{hilldale2008assessing,sagawa2019satellite}. A comprehensive overview of additional direct bathymetric survey methods and their limitations can be found in the review paper \cite{lecours2016review}  and the references therein.

As an alternative to these direct measurement techniques, we can infer the bottom profile using free surface measurements, which are significantly more accessible than direct underwater data \cite{hilldale2008assessing,smart2009river}. This strategy is formulated as an inverse problem based on a suitably chosen governing equations. For different models used to detect the bed profile from free surface data, see the review paper \cite{sellier2016inverse}. In this work, we formulate the recovery of bathymetry as an inverse problem for the one‑dimensional shallow water equations. Importantly, our approach does not rely on the zero‑inertia approximation \cite{gessese2013bathymetry} nor the small wave amplitude assumption \cite{cocquet2021optimization}. In this direction, there are mainly two approaches used to solve this inverse problem: direct numerical methods (one‑shot) \cite{gessese2011reconstruction,gessese2011inferring} and iterative schemes based on PDE‑constrained optimization \cite{angel2024bathymetry,khan2021variational,khan2022variational}. In the one‑shot reconstruction paradigm, one first discretizes the forward shallow water equations and then manipulates the discrete system by swapping the roles of the free surface elevation and the bottom profile. This scheme requires access to the free surface elevation, the inlet discharge, and the bed elevation at the upstream boundary of the open channel. However, this method assumes steady‑state flow. Moreover, there is no rigorous analysis of its convergence, stability, or independence from the initial conditions of the forward model. The only exception is \cite{gessese2011inferring}, where a pseudo‑analytic solution of the bed profile is derived in the steady one‑dimensional case. On the other hand, the optimization approaches are free from the steady‑state assumption; however, they require the same measurements as the direct methods, collected over a sufficiently long time interval. In particular, in \cite{khan2021variational,khan2022variational}, a limited number of observation locations of the free surface is considered, yet the numerical experiments show that increasing the number of observation points improves convergence. In addition to the measurements used by the direct approach, the optimization-based approaches implicitly necessitate specified initial velocity and outlet depth to numerically solve the shallow water system and its adjoint. Moreover, a smallness condition on the amplitudes of the free surface and the bathymetry is required for the convergence in \cite{khan2021variational,khan2022variational}.

In this work, we follow the one‑shot approach introduced in \cite{gessese2011reconstruction}. Importantly, we refrain from assuming steady‑state flow. Moreover, instead of manipulating a discrete shallow-water scheme, we first derive an inverse system from the continuous forward model, where the bottom profile is the only unknown. We then discretize this inverse system using a second‑order finite‑difference method.  Crucially, our formulation is inherently independent of the initial conditions of the forward model and relies on minimal measurements that naturally extend those of the steady‑state method \cite{gessese2011reconstruction}. Furthermore, we prove a Lipschitz stability estimate for our approach, which is an important guarantee for this class of ill‑posed inverse problems. In the end, we confirm the accuracy and robustness of the approach  through various numerical tests. The Python code is available in \cite{codepyhon}, allowing the reproduction of all numerical results of this paper and the development of further test cases.

\subsection{Main results}
We denote the elevation of the free surface by $\zeta$. Assume that, at a single time $t^{\star}$, we have access to the following three free surface measurements:
\begin{equation}\label{surafcemeasurement}
S_m(t^{\star})=\left(\zeta,\,\partial_t\zeta,\,\partial_t^2\zeta\right)|_{\{t^{\star}\}\times I}, 
\end{equation}
 throughout a fixed channel interval $I=(a_1,a_2)$. Equivalently, we may replace the single instant measurements \eqref{surafcemeasurement} by knowing $\zeta$ on a small time interval $(t^{\star}-\epsilon,t^{\star}+\epsilon)\times I$. Note that these free surface measurements are less restrictive than assuming access to a particular free surface profile over an interval of time across the domain, such as standing periodic waves in \cite{nicholls2009detection}. By supplementing \eqref{surafcemeasurement} with the inlet discharge information $\left(q(a_1,t^{\star}),\partial_t q(a_1,t^{\star})\right)$ and the bed elevation at the upstream boundary $b(a_1)$, see Figure \ref{fig:inverse_domain}, our main results can then be summarized as follows:
\begin{enumerate}
\item Under the nondegeneracy condition of the discharge \eqref{condition}, we derive from the forward shallow water system \eqref{sh2} a direct inverse model \eqref{sh4} for the bed profile detection. Note that this condition on the discharge holds automatically in supercritical and steady flows. In particular, for the steady‑state case, we obtain a fully analytic expression for the bottom profile \eqref{sh5} by integrating the inverse model \eqref{sh4}.  Crucially, the required inlet discharge $q(a_1)=constant$ can be replaced by access to the bed elevation at a specific location $x_{\ast}$ \eqref{simpliciation for discharge}. Moreover,  the measurements \eqref{surafcemeasurement} reduce to the free surface elevation alone.
\item We establish two sufficient conditions \eqref{condition on bottom} and \eqref{condition on bottom modified} that guarantee the nondegeneracy assumption of the discharge for any sufficiently large time in subcritical and transcritical cases. These criteria relate inlet/outlet velocities and depths, plus a term involving the bed profile variation, which is controlled by the right‑ and left‑going characteristic speeds. We then prove a Lipschitz stability estimate \eqref{stability} for the inverse model \eqref{sh4}.
\item Implementing the inverse model \eqref{sh4} via a second‑order finite‑difference method, we perform several numerical tests that include classical benchmarks to demonstrate both the high accuracy and robustness of the method. Moreover, in all these numerical tests, and depending on the flow, one or both sufficient conditions \eqref{condition on bottom} and \eqref{condition on bottom modified} hold. These experiments also illustrate the influence of initial conditions, boundary data, and bed geometry on flow characteristics.
\end{enumerate}

\subsection{Outline of the paper}
In Sect. \ref{sec:inverse direct}, we briefly introduce the one‑dimensional shallow water equations and develop the corresponding inverse model for recovering the channel bed using the three free surface measurements taken at a single time \eqref{surafcemeasurement}. We then impose a strict positivity condition on the discharge across the open channel to guarantee the feasibility of a one‑shot method. The section concludes with the steady‑flow scenario, where we solve the inverse model to derive an explicit analytic reconstruction formula for the bed topography. In Sect. \ref{sec:guaranteeandstability}, we establish two (practically feasible) sufficient conditions under which the flow rate remains strictly positive in subcritical and transcritical flows. These two conditions provide a guarantee of nondegeneracy for the inverse model after a sufficiently large time. Both conditions depend on the ability of the characteristic speeds to dominate bottom variation. We end this section by proving a Lipschitz stability estimate for our reconstruction approach. In Sect. \ref{sec:numericalexpr}  we conduct several numerical tests to qualify the bathymetry reconstruction accuracy and the method stability.  Each test verifies the two sufficient conditions to confirm that the discharge remains strictly positive after a sufficiently large time. Finally, Sect. \ref{sec:conclusions} summarizes our findings, provides some further comments, and outlines possible directions for future work.

\section{The Forward and the inverse problems}\label{sec:inverse direct}
This section begins with a brief overview of the one‑dimensional shallow water equations and their principal features. Moreover, we present two mathematically equivalent formulations of this system and highlight an important physical distinction that we will exploit in our bed profile detection. In Subsection \ref{inverseproblemintro}, we algebraically manipulate the forward system to derive an inverse model that only needs  \eqref{surafcemeasurement} to infer the bathymetry. We conclude Subsection \ref{inverseproblemintro} by deriving an explicit analytic expression for the bed profile in the steady-flow case, where the usual inlet discharge requirement can be substituted by knowing the bed elevation at a second specific spatial point.
\subsection{Governing equations }
We consider the motion of a layer of an ideal, inviscid, incompressible, and irrotational fluid in an open channel. The fluid domain is bounded below by a solid bottom and above by a free surface, with both boundaries assumed to be parametrized, see Figure \ref{fig:inverse_domain}. Assuming the shallow water regime, $\sigma = \frac{h}{\lambda} \ll 1,$ where $h$ is the depth of the flow and $\lambda$ is the wavelength. Averaging the two-dimensional Euler equations over the vertical axis yields the non-dispersive \(\mathcal{O}(\sigma^2)\) one-dimensional shallow water equations \cite{de1871theorie,lannes2013water}:
\begin{equation}\label{sh1}
\begin{cases}
\displaystyle \partial_{t}\zeta + \partial_{x}\bigl((\zeta-b)\,u\bigr) = 0,\\[6pt]
\displaystyle \partial_{t}u + u\,\partial_{x}u + g\,\partial_{x}\zeta = 0,
\end{cases}
\end{equation}
where $u$ is the depth-averaged horizontal velocity, $\zeta=h+b$ is the free surface elevation, $b$ is the bed topography (bathymetry) to be reconstructed, and $g$ is the gravitational acceleration set to $g = 9.812 \;m/s^2$.

In system \eqref{sh1}, the friction resistance of the bathymetry $b$ is neglected under the inviscid flow assumption. Nonetheless, numerical investigations have shown that the friction term exerts a negligible influence on both the forward and inverse problems, see \cite{forbes1988critical,fadda1997open,tam2015predicting,angel2024bathymetry}. Therefore, the Euler-based model \eqref{sh1} is practically valid for the study of the impact of bathymetry on surface wave deformation and for reconstructing bed topography from a given admissible measurements, under the assumption of the shallow-water regime. With straightforward algebraic manipulations,  one can transform the above system \eqref{sh1} posed in terms  of velocity and free surface elevation, to the following  equivalent (discharge, depth) format:
\begin{equation}\label{sh2}
\begin{cases}
\displaystyle \partial_{t}h + \partial_{x}q = 0,\\[6pt]
\displaystyle \partial_{t}q + \partial_{x}\left(\frac{q^2}{h}\right) + g\,h\,\partial_{x}(h+b)= 0,
\end{cases}
\end{equation}
where $q=h\,u$ is the discharge (flow rate). Although the systems \eqref{sh1} and \eqref{sh2} are mathematically equivalent, they are physically different. For the moment, in practical applications, the discharge $q$ in the domain can be measured by direct measurement \cite{huang2018detecting}. As an alternative approach, we can use the first equation of the above system \eqref{sh2} ($\partial_x q=-\partial_t\zeta$) to infer the discharge in the domain only by knowing the first time derivative of free surface elevation ($\partial_t\zeta$), along with the inlet discharge at the upstream of the channel. This provides an easier method for approximating discharge using free surface measurements. In contrast, determining the depth-averaged horizontal velocity $u$ is more involved and may require a known relation that connects $u$ to the fluid’s surface velocity (see, for example, \cite{lee2006electromagnetic}), which, in turn, should be approximated using direct measurement techniques \cite{adrian2011particle}. In this study, we leverage the ability to reconstruct the discharge from free surface measurements in order to infer the bottom topography.

In \cite{gessese2011reconstruction}, a direct algorithm for channel bathymetry was introduced, wherein the bed profile $b$ is reconstructed by a finite-difference discretization of \eqref{sh2} in which the roles of the free surface elevation $\zeta$ and the bed topography $b$ are interchanged within the momentum equation. Subsequently, this methodology was extended to two-dimensional shallow water in \cite{gessese2012direct} and using finite-element discretization in \cite{hajduk2020bathymetry}. All these algorithms presuppose steady-state flow and require knowledge of the free surface elevation, the inlet discharge, and the value of the bed topography at the upstream. To date, no rigorous proof of convergence has been provided. Furthermore, the dependency on the chosen initial conditions for these algorithms is not studied. As an alternative approach, the authors of \cite {gessese2011reconstruction} proposed a pseudo-analytic solution obtained by algebraic manipulation of \eqref{sh2} to yield an equation for $b$ \cite{gessese2011inferring}, which is then approximated using a first-order finite-difference scheme. Although this pseudo-analytic approach maintains the same prerequisite measurements as well as the steady-state assumption, it offers provable convergence, stability, and independence from the initial data.

One of the primary objectives of this paper is to develop a direct, fast, and stable analytical approach for identifying channel bathymetry in unsteady-state flow, utilizing only inlet discharge and free surface elevation measurements collected over a short time interval. 

\subsection{The inverse model}\label{inverseproblemintro}
In this part, we derive the inverse equation. The main idea relies on an algebraic reordering of system \eqref{sh2} and observing that the continuity equation of \eqref{sh2} allows for the reconstruction of the discharge at any instant $t$ over $(a_1,a_2)$ only by knowing $q(t,a_1)$ and $\partial_t \zeta(t,.)$, see Figure \ref{fig:inverse_domain}.

Consider an arbitrary open interval $I=(a_1,a_2)\subset \mathbb{R}$ within the open channel over which we aim to detect the bed topography. Assume the bottom profile $b$ is sufficiently smooth such that the shallow water system  \eqref{sh1} admits a unique solution $(\zeta,u)$ on $(0,t_f)\times I$, for some maximal time $t_f$ during which  $\zeta$ and $q$ remain sufficiently smooth. Local well-posedness of \eqref{sh1} on the unbounded domain $\mathbb{R}$ is proved in \cite{lannes2013water} and on the half-line  $\mathbb{R_+}$ in \cite{iguchi2021hyperbolic}. The local well-posedness on a bounded interval $I$  with non-vanishing velocity at $\partial I$ is established in  \cite{petcu2013one,huang2011one}. Throughout this paper, we assume that the system \eqref{sh1} admits a solution and that $(\zeta,q)$ is sufficiently smooth on $(0,t_f)\times I$.

Instead of directly dealing with system \eqref{sh2} as in \cite{gessese2011reconstruction}, we transform it into the following system:
\begin{equation}\label{sh3}
\begin{cases}
\displaystyle q(t,x) = q(t,a_1)-\displaystyle\int_{a_1}^x\partial_{t}\zeta(t,y)\,{\dd}y,\;\;\;\forall(t,x)\in(0,t_f)\times I,\\[6pt]
\partial_{x}\!\biggl(\frac{q^{2}}{h}\biggr) = -\,g\,h\,\partial_{x}\zeta \;-\;\partial_{t}q,\;\;\;\forall(t,x)\in(0,t_f)\times I.
\end{cases}
\end{equation}
As mentioned earlier, assuming  access to the inlet discharge $q(t,a_1)$ at an instant $t\in (0,t_f)$, we can observe that from the first equation of \eqref{sh3}, the discharge $q(t,x)$ at any point $x\in I$ can be directly  obtained by knowing $\partial_t\zeta(t,\cdot)$ over $(a_1,x)$. The hypothesis of a known inlet discharge is a standard assumption in the inverse problem of bottom topography detection \cite{gessese2011reconstruction,angel2024bathymetry,khan2021variational}, as well as in the control of the wave profile using inlet and outlet data \cite{sanders2000adjoint}.

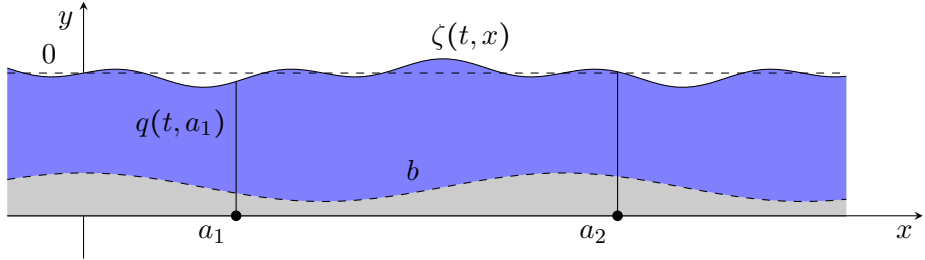
\begin{figure}[H]
\centering
\begin{tikzpicture}
\begin{axis}[
    width=0.8\textwidth,    
    height=5cm,
    axis lines = middle,
    xlabel = $x$,
    ylabel = $y$,
    xtick =  \empty,
    ytick = \empty,
    xmin = -1,
    xmax = 11,
    ymin = -3, 
    ymax = 15, 
    no markers,
    grid = none,
    every axis x label/.style={at={(current axis.right of origin)},anchor=north east},
    every axis y label/.style={at={(current axis.above origin)},anchor=north east},
    ]
    
    \addplot[black, domain=-1:10, samples=100,name path=Z] {sin(deg(x))*cos(deg(x*2))+10} node[pos=0.9, anchor=south west] {$ $} node[pos=0.5, anchor=south west] {$\zeta(t,x) $};
     \addplot[black, domain=-1:10, samples=100, name path=A] {0} ;
     
    \addplot[dashed, domain=-1:10, samples=100, name path=B] {cos(deg(x))+2} node[pos=0.9, anchor=south west] {$ $} node[pos=0.5, anchor=south east] {$ b $};
    
    \draw[black] (axis cs:2,0) -- (axis cs:2,9.45) node[pos=0.5, anchor=south east] {$q(t,a_1) $} ;
    \draw[black] (axis cs:7,0) -- (axis cs:7,10.1) node[pos=0.5, anchor=south west] {$\ $};
    
    \fill (axis cs:2,0) circle[radius=2pt] node[below left] {$a_1$};
    \fill (axis cs:7,0) circle[radius=2pt] node[below left] {$a_2$};
    
    \draw[dashed] (axis cs:-1,10) -- (axis cs:10,10)  node[pos=0.05, above] {0};
   
    
    
    \addplot[black!20] fill between[of=A and B];
    \addplot[blue!50] fill between[of=B and Z];

\end{axis}
\end{tikzpicture}
\caption{ Scheme for the inverse problem}
\label{fig:inverse_domain}
\end{figure}
To develop a one-shot approach for bathymetry detection in unsteady flows, we impose the following nondegeneracy assumption:
\begin{equation}\label{condition}
    \exists\; \beta>0,\;\;\exists\; t^{\star}\in(0,t_f),\;\; \forall\;x\in I,\quad q(t^{\star},x)\geq \beta.
\end{equation}
Then the inverse model can be written as follows:

\begin{equation}\label{sh4}
\begin{cases}
\displaystyle q(t^{\star},x) = q(t^{\star},a_1)-\displaystyle\int_{a_1}^x\partial_{t}\zeta(t^{\star},y)\,{\dd}y,\;\;\;\forall x\in I,\\[6pt]
\displaystyle \partial_tq(t^{\star},x) = \partial_t q(t^{\star},a_1)-\displaystyle\int_{a_1}^x\partial_{t}^2\zeta(t^{\star},y)\,{\dd}y,\;\;\;\forall x\in I,\\[6pt]
\displaystyle \partial_{x}\phi (t^{\star},x)
= -\,g\,\left(\partial_{x}\zeta \;q^2\right)(t^{\star},x)\;\frac{1}{\phi(t^{\star},x) }\;-\;\partial_{t}q(t^{\star},x), \;\;\;\forall x\in I, 
\end{cases}
\end{equation}
where $\phi =\frac{q^2}{h}$ and $t^{\star}$ is given by \eqref{condition}.
Under the assumption \eqref{condition}, the only free surface measurements required to solve the third equation in \eqref{sh4} are $\left(\zeta,\;\partial_t \zeta,\; \partial_{t}^2 \zeta\right)|_{\{t^{\star}\}\times I}$. It should be mentioned that these three measurements correspond precisely to those considered in \cite{vasan2013inverse} to detect the bottom profile $b$ using the Euler model when assuming a periodic velocity potential. After solving the equation for $\phi$ in \eqref{sh4}, the bottom topography can be computed as follows:
\begin{equation}\label{bottomsolution}
b(x)=\zeta(t^{\star},x)-\frac{q^2(t^{\star},x)}{\phi(t^{\star},x)},\quad \forall x\in I.\\[6pt]
\end{equation}
\begin{remark}\label{abouttheinversemodel}
    Although the work adopts the context of measurements in a single instant, it can be applied to any time interval $(0,t_f)$. If we have $\zeta$ and $q(\cdot,a_1)$ over $(0,t_f)$, then we can reconstruct $q(t,x)$ over $(0,t_f)\times I$ via the continuity equation (first equation of \eqref{sh3}) and numerically computing its time derivative, or directly evaluating $\partial_t q(t,x)$ from the second equation of \eqref{sh4}. Finally, by choosing any $t^{\star}$ such that the discharge is strictly positive  \eqref{condition}, we approximate the inverse equation on $\phi$ in the above system \eqref{sh4} to infer $b$ \eqref{bottomsolution}.
    \end{remark}
Compared to optimization-based approaches and direct steady-state schemes, the major difference is the nondegeneracy assumption \eqref{condition}. However, this requirement is naturally met in open channels, where the fluid flows from upstream ($a_1$) to downstream ($a_2$). For the moment, assuming a globally supercritical flow: for any $(t,x)\in(0,t_f)\times I$, we have
\begin{equation}\label{supercritical}
u(t,x)>\sqrt{g\,h(t,x)}.
\end{equation}
Then, the inequality  \eqref{condition} holds globally, since $q=h\,u$ and $h>\varepsilon$ for some positive constant $\varepsilon$. The same reasoning applies for flow with strong supercritical-type properties \cite{huang2011one} and for the 
strengthened supercritical-type conditions considered in \cite{kounadis2020galerkin}.

The system \eqref{sh4}, along with knowledge of $b(a_1)$, can be seen as the natural extension of the steady-state case considered in \cite{gessese2011reconstruction,gessese2011inferring}. Moreover, based on \eqref{sh4}, where the friction term is neglected, we can derive an exact analytic solution for $b$ as follows:\\
Assume the steady state ($\partial_t =0$). Then \eqref{sh4} reduces to
\begin{equation*}
\begin{cases}
\displaystyle q(t,x) = q(0,a_1)=q,\;\; \forall x\in I,\\[6pt]
\displaystyle \phi(x)\,\partial_{x}\phi (x)
= -\,g\,\left(\partial_{x}\zeta \;q^2\right)(x), \;\; \forall x\in I, 
\end{cases}
\end{equation*}
straightforward integration over $(a_1,x)$ for any $x\in I$, yields

\begin{equation}\label{sh5}
\begin{cases}
\displaystyle \frac{1}{h^2(x)} = \frac{1}{h^2(a_1)}-\frac{2 g}{q^2}(\zeta(x)-\zeta(a_1)),\\[6pt]
\displaystyle b=\zeta-h.
\end{cases}
\end{equation}
In addition, if we assume that the bottom topography is known at a second point $x_{\ast}$ in $\{x\in I,\;\; \zeta(x)\neq \zeta(a_1)\}$ (for example, if possible, $a_2$). Then the inlet discharge $q$ in \eqref{sh5} can be replaced by:

\begin{equation}\label{simpliciation for discharge}
q^2=-2g\,\frac{\left(\zeta(x_{\ast})-\zeta(a_1)\right)\,h^2(a_1) h^2(x_{\ast})}{h^2(a_1)-h^2(x_{\ast})},
\end{equation}
which consequently yields an analytic bathymetry reconstruction independent of the inlet discharge.

We have introduced our analytic approach for the inverse problem of bathymetry detection in shallow water for both steady and unsteady cases.  Although strictly positive discharge over the domain holds naturally in the steady and supercritical cases, in the next section, we demonstrate that this condition remains valid for subcritical and transcritical flows.

\section{ Sufficient Conditions and Stability}\label{sec:guaranteeandstability}
In this section, we begin by presenting the main Theorem \ref{subcriticaltheorem},  which delivers a sufficient condition for the nondegeneracy requirement \eqref{condition} for subcritical flow cases. This criterion is based on an inequality linking the inlet data, outlet data, and bed profile variation with the characteristic propagation speeds. As a direct consequence of this theorem, we derive Corollary \ref{transflow}, which introduces a second sufficient condition for \eqref{condition} that allows transcritical flows. Importantly, this alternative condition \eqref{condition on bottom modified} depends only on inlet data, which matches the focus of this paper. Crucially, it reveals that bed profile variation is controlled by the right‑going characteristic speed. Finally, in Subsection \ref{stabilityanalysis}, we establish a Lipschitz stability estimate for our reconstruction method and illustrate the role of free surface smoothness in error amplification.
\subsection{Nondegeneracy Sufficient Conditions}
In the previous Section \ref{inverseproblemintro}, the inverse approach was introduced under the nondegeneracy of discharge  \eqref{condition}. Although this assumption is automatically fulfilled for supercritical flows, Theorem \ref{subcriticaltheorem} below establishes that it continues to hold in the strong subcritical cases considered in \cite{iguchi2021hyperbolic,kounadis2020galerkin}. Specifically, we assume that there exist two positive constants $c_1,\,c_2>0$ such that for all $(t,x)\in (0,t_f)\times I $ , we have 
\begin{equation}\label{subcriticalflow}
\displaystyle \lambda_1=u(t,x)+\sqrt{g\,h(t,x)}\geq c_1\quad\text{and}\quad \lambda_2=u(t,x)-\sqrt{g\,h(t,x)}\leq - c_2.\\[6pt]
\end{equation}

\begin{theorem}\label{subcriticaltheorem}
Let $\varepsilon>0$ such that $h(t,x)\geq \varepsilon,\; \forall (t,x)\in(0,t_f)\times I$. Assume that the bed profile $b\in C^1(I)$ and that the fluid flow satisfies the hypothesis \eqref{subcriticalflow}. If there exists $\rho >0$, such that for any $t\in \left(\frac{a_2-a_1}{c_0},t_f\right)$, where $c_0=\min(c_1,c_2)$, the following condition holds:
\begin{equation}\label{condition on bottom}
\displaystyle u(s,a_1)+u(\tau,a_2)\geq \rho -2\sqrt{g}\left(\sqrt{h(s,a_1)}-\sqrt{h(\tau,a_2)}\right)+g\left(\frac{1}{c_1}+\frac{1}{c_2}\right)\displaystyle\sum_{i\geq 1} b(\gamma_{i+1})-b(\gamma_{i}),\quad\forall (s,\tau)\in (0,t), \\[6pt]
\end{equation}
where $(\gamma_i)_i $ are such that $b'(\gamma_i)=0$ and  $\bigcup_{i\geq 1} (\gamma_i,\gamma_{i+1})=\left\{x\in I,\;\; b'(x)>0\right\}$. Then,  for any $t\in \left(\frac{a_2-a_1}{c_0},t_f\right)$, we have 
\begin{equation}\label{donesubcretical}
\displaystyle q(t,x)\geq \varepsilon\,\frac{\rho }{2},\quad \forall x \in I.
\end{equation}
\end{theorem}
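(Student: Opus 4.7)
The plan is to pass to Riemann invariants and exploit the fact that, after a sufficiently large time, the two families of characteristics reaching any interior point $(t,x)$ originate from the lateral boundaries $\{a_1\}$ and $\{a_2\}$ rather than from the initial data. Set $\lambda_\pm = u \pm \sqrt{gh}$ and $R_\pm = u \pm 2\sqrt{gh}$. A direct computation on \eqref{sh1} gives the characteristic identities
\begin{equation*}
\bigl(\partial_t + \lambda_\pm\,\partial_x\bigr) R_\pm \;=\; -\,g\,b'(x),
\end{equation*}
and, since $R_+ + R_- = 2u$, the pointwise value of $u$ at $(t,x)$ is determined by the boundary traces of $R_+$ and $R_-$ plus the forcing $-g\,b'$ integrated along the two characteristics through $(t,x)$.

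First I would fix $t\in \bigl((a_2-a_1)/c_0,\,t_f\bigr)$ and $x\in I$, and observe that the assumption \eqref{subcriticalflow} forces $\lambda_+\geq c_1$ and $-\lambda_-\geq c_2$; consequently the backward $C_+$ characteristic through $(t,x)$ travels to the left with speed at least $c_1$ and thus exits $I$ through $\{a_1\}$ at some time $s\in(0,t)$, while the backward $C_-$ characteristic exits through $\{a_2\}$ at some time $\tau\in(0,t)$. Integrating the two Riemann invariant identities along these characteristics and adding the results yields
\begin{equation*}
2u(t,x) \;=\; R_+(s,a_1) + R_-(\tau,a_2)
\;-\; g\!\int_s^t b'\!\bigl(x_+(\sigma)\bigr)\,\dd\sigma
\;-\; g\!\int_\tau^t b'\!\bigl(x_-(\sigma)\bigr)\,\dd\sigma.
\end{equation*}

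Next I would estimate the two bed-forcing integrals by changing variables $y=x_\pm(\sigma)$, using $\dd y=\lambda_\pm(y)\,\dd\sigma$. Along $C_+$ the variable $y$ traverses $(a_1,x)$ and $\lambda_+\geq c_1$; along $C_-$, $y$ traverses $(x,a_2)$ and $-\lambda_-\geq c_2$. Dropping the (favorable) negative part of $b'$ one obtains
\begin{equation*}
\int_s^t b'\!\bigl(x_+(\sigma)\bigr)\dd\sigma \;\leq\; \frac{1}{c_1}\!\int_{a_1}^{x} (b')^+\,\dd y,\qquad
\int_\tau^t b'\!\bigl(x_-(\sigma)\bigr)\dd\sigma \;\leq\; \frac{1}{c_2}\!\int_{x}^{a_2} (b')^+\,\dd y,
\end{equation*}
and, since $(b')^+$ is supported on $\bigcup_{i\geq 1}(\gamma_i,\gamma_{i+1})$, each of these integrals is bounded by $\sum_{i\geq 1}\bigl(b(\gamma_{i+1})-b(\gamma_i)\bigr)$. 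Combining with the previous identity gives
\begin{equation*}
2u(t,x) \;\geq\; \bigl[u(s,a_1)+2\sqrt{g\,h(s,a_1)}\bigr] + \bigl[u(\tau,a_2)-2\sqrt{g\,h(\tau,a_2)}\bigr]
\;-\; g\!\left(\tfrac{1}{c_1}+\tfrac{1}{c_2}\right)\!\sum_{i\geq 1}\bigl(b(\gamma_{i+1})-b(\gamma_i)\bigr).
\end{equation*}
Applying the hypothesis \eqref{condition on bottom} (which holds for this specific pair $(s,\tau)\in(0,t)$) to the right-hand side yields $2u(t,x)\geq \rho$, and multiplying by $h(t,x)\geq\varepsilon$ delivers \eqref{donesubcretical}.

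\paragraph{Main obstacle.}
The nontrivial step is the characteristic tracing argument: one must show that for $t>(a_2-a_1)/c_0$ the backward $C_\pm$-trajectories through an arbitrary $(t,x)\in(0,t_f)\times I$ reach the lateral boundaries \emph{before} reaching $\{0\}\times I$, and that the boundary exit times $s$ and $\tau$ are well defined. This follows from the uniform bounds $\lambda_+\geq c_1$ and $-\lambda_-\geq c_2$ granted by \eqref{subcriticalflow}, but requires some care because the characteristic ODEs depend on the (a priori unknown) solution; here the assumed smoothness of $(\zeta,q)$ on $(0,t_f)\times I$ makes Cauchy--Lipschitz applicable. Once the geometry of characteristics is controlled, the rest of the argument reduces to the one-dimensional change of variables and the elementary bound $\int(b')^+ = \sum_{i}\bigl(b(\gamma_{i+1})-b(\gamma_i)\bigr)$.
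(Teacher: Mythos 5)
Your proposal is correct and follows essentially the same route as the paper's proof: the same Riemann-invariant formulation, the same backward-characteristic tracing showing that for $t>(a_2-a_1)/c_0$ both characteristics exit through the lateral boundaries, the same change of variables bounding the bed-forcing integrals by $\frac{1}{c_1}$ and $\frac{1}{c_2}$ times $\int_{\{b'>0\}}b'$, and the same summation to reach $2u(t,x)\geq\rho$ and hence $q\geq\varepsilon\rho/2$.
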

\begin{proof}
The proof exploits the fact that information within the domain can be obtained from the upstream and downstream boundaries (see Figure \ref{fig:inverse_domain}) via the characteristic speeds  $\lambda_1 $  and $\lambda_2$ defined in \eqref{subcriticalflow}.

We first recast system \eqref{sh2} in its equivalent balance-law formulation:
\begin{equation}\label{thp1}
\begin{cases}
\displaystyle \partial_t R_1+\lambda_1\,\partial_x R_1=-g\, b',\;\; \text{in}\;(0,t_f)\times I,\\[6pt]
\displaystyle \partial_t R_2+\lambda_2\,\partial_x R_2=-g\,b',\;\; \text{in}\;(0,t_f)\times I,
\end{cases}
\end{equation}
where $R_1$ and $R_2$ are the Riemann invariants, defined as follows:
$$R_1=u+2\sqrt{g\,h}\quad\text{and}\quad R_2=u-2\sqrt{g\,h}.$$
Assuming that the flow hypothesis \eqref{subcriticalflow} and the condition \eqref{condition on bottom} are satisfied, fix an arbitrary point $(t,x)\in\left(\frac{a_2-a_1}{c_0},t_f\right)\times I$. We now proceed to establish \eqref{donesubcretical}. By applying the method of characteristics, the system \eqref{thp1} can be expressed in its characteristic form as follows

\begin{equation}\label{thp2}
\begin{cases}
\displaystyle \partial_s R_1 \left(s,X_1(s)\right)=-g\,b'\left(X_1(s)\right), \;\; \forall s\in(0,t),\\[6pt]
\displaystyle X_1'(s)=\lambda_1\left(s,X_1(s)\right),\;\;\forall s\in(0,t);\;\; X_1(t)=x,
\end{cases}
\end{equation}
\begin{equation}\label{thp3}
\begin{cases}
\displaystyle \partial_s R_2 \left(s,X_2(s)\right)=-g\,b'\left(X_2(s)\right), \;\; \forall s\in(0,t),\\[6pt]
\displaystyle X_2'(s)=\lambda_2\left(s,X_2(s)\right),\;\;\forall s\in(0,t);\;\; X_2(t)=x.
\end{cases}
\end{equation}
Integrating the second equations of systems \eqref{thp2} and \eqref{thp3}, we obtain
\begin{equation*}
\begin{cases}
\displaystyle x-X_1(\tau)=\displaystyle\int_{\tau}^t\,\lambda_1\left(s,X_1(s)\right) \,{\dd}s,\;\; \forall\tau\in(0,t),\\[6pt]
\displaystyle x-X_2(\tau)=\displaystyle\int_{\tau}^t\,\lambda_2\left(s,X_1(s)\right)\,{\dd}s,\;\; \forall\tau\in(0,t).
\end{cases}
\end{equation*}
Applying hypothesis \eqref{subcriticalflow} yields the following inequalities

\begin{equation*}
\begin{cases}
\displaystyle X_1(\tau)\leq x-c_0(t-\tau),\;\; \forall\tau\in(0,t),\\[6pt]
\displaystyle X_2(\tau)\geq x+c_0(t-\tau),\;\; \forall\tau\in(0,t).
\end{cases}
\end{equation*}
Using $t>\frac{a_2-a_1}{c_0}$, we get
\begin{equation}\label{thp4}
\begin{cases}
\displaystyle X_1(0)\leq x-c_0t<a_1,\\[6pt]
\displaystyle X_2(0)\geq x+c_0t>a_2.
\end{cases}
\end{equation}
By the mean-value theorem, since $X_1$ is strictly increasing on $(0,t)$ and $X_2$ is strictly decreasing on $(0,t)$, along with \eqref{thp4}, there exist unique $\tau_1$ and $\tau_2$ in $(0,t)$  that depend on $x$ and $t$, satisfying
$$X_1(\tau_1)=a_1\quad\text{and}\quad X_2(\tau_2)=a_2.$$
Integrating the first equations of systems \eqref{thp2} and \eqref{thp3} over $(\tau_1,t)$ and $(\tau_2,t)$, respectively, we arrive at
\begin{equation}\label{indepencyofintialdata}
\begin{cases}
\displaystyle R_1(t,x)-R_1(\tau_1,a_1)=-g\displaystyle\int_{\tau_1}^t b'\left(X_1(s)\right)\,{\dd}s,\\[6pt]
\displaystyle R_2(t,x)-R_2(\tau_2,a_2)=-g\displaystyle\int_{\tau_2}^t b'\left(X_2(s)\right)\,{\dd}s.
\end{cases}
\end{equation}
In the following steps of the proof, we focus only on the Riemann invariant $R_1$; analogous reasoning applies to $R_2$. We have 
\begin{equation}\label{R1expr}
R_1(t,x)=u(\tau_1,a_1)+2\sqrt{g\,h(\tau_1,a_1)}\;\,-g\displaystyle\int_{\tau_1}^t b'\left(X_1(s)\right)\,{\dd}s.
\end{equation}
On the other hand, using the change of variable $r=X_1(s)$, the right-hand integral that involves $b$ in the above equation \eqref{R1expr} can be rewritten as follows
\begin{equation}\label{bottomintegral}
\begin{aligned}
\displaystyle\int_{\tau_1}^t b'\left(X_1(s)\right)\,{\dd}s&=\displaystyle\int_{a_1}^x b'(r)\;\frac{1}{\lambda_1\left(X_1^{-1}(r),r\right)}\,{\dd}r\\
&=\displaystyle\int_{\left\{(a_1,x),\;\;b'>0\right\}} \frac{b'}{\lambda_1}\,+ \displaystyle\int_{\left\{(a_1,x),\;\;b'<0\right\}} \frac{ b'}{\lambda_1}\\
&\leq \displaystyle\int_{\left\{(a_1,x),\;\;b'>0\right\}} \frac{ b'}{\lambda_1}\\
&\leq \frac{1}{c_1}\displaystyle\int_{\left\{I,\;\;b'>0\right\}}  b'.
\end{aligned}
\end{equation}
Substituting the above inequality \eqref{bottomintegral} into \eqref{R1expr}, we obtain the following inequality
\begin{equation}\label{R1exprineq}
R_1(t,x)\geq u(\tau_1,a_1)+2\,\sqrt{g\,h(\tau_1,a_1)}\;\,-\frac{g}{c_1}\displaystyle\int_{\left\{I,\;\;b'>0\right\}} b'.
\end{equation}
Following similar arguments, with attention to the fact that $X_2$ is a strictly decreasing function over $(0,t)$, we derive the following lower bound for $R_2$
\begin{equation}\label{R2exprineq}
R_2(t,x)\geq u(\tau_2,a_2)-2\,\sqrt{g\,h(\tau_2,a_2)}\;\,-\frac{g}{c_2}\displaystyle\int_{\left\{I,\;\;b'>0\right\}}  b'.
\end{equation}
Summing the above inequalities \eqref{R1exprineq} and \eqref{R2exprineq} side by side, we obtain
\begin{equation*}
2\,u(t,x)\geq u(\tau_1,a_1)+u(\tau_2,a_2)+2\sqrt{g}\,\left(\sqrt{h(\tau_1,a_1)}-\sqrt{h(\tau_2,a_2)}\right)\;\,-g\left(\frac{1}{c_1}+\frac{1}{c_2}\right)\displaystyle\int_{\left\{I,\;\;b'>0\right\}}  b'.
\end{equation*}
Since $ b'$ is continuous on $I$, then $\left\{I,\;\;b'>0\right\}$ is an open set of $\mathbb{R}$. Therefore,  the above integral of $ b'$ can be written as the Riemann-sum in \eqref{condition on bottom}. Hence, the condition \eqref{condition on bottom} implies
$$2\,u(t,x)\geq \rho .$$
Finally,
$$q(t,x)=h(t,x)u(t,x)\geq \varepsilon\,\frac{\rho }{2},$$
which completes the proof.
\end{proof}
Before presenting a sufficient condition that utilizes only the inlet data to guarantee the nondegeneracy assumption \eqref{condition}, the following remarks are crucial.
\begin{remark}\label{r2}
Under the hypothesis \eqref{subcriticalflow} and regularity conditions on the bottom profile, initial data, and boundary data of the system \eqref{sh2}, a result of local existence in $\mathbb{W}^m(t_f) = \cap_{j=0}^m C^j([0,t_f],H^{m-j}(\mathbb{R}_+)^2)$ for $m \ge 2$ is established in \cite[Theorem~2.25]{iguchi2021hyperbolic}. Moreover,  a direct continuity argument implies that, if there exists $ \bar t \in (0,t_f)$ and $ \beta > 0$ such that $ \,  q(\bar t, x) \ge \beta, \, \forall x  \in I$, then there exists a neighborhood $V_{\bar t}$ of  $\bar t$ in $(0,t_f)$ such that  $ \,  q(t,x) \ge \beta/2, \, \forall (t,x)  \in V_{\bar t}  \times I$.
\end{remark}

\begin{remark}\label{r1}
    Theorem \ref{subcriticaltheorem} remains valid if we only require the flow hypothesis \eqref{subcriticalflow} and the sufficient condition \eqref{condition on bottom} on a shifted intervals   $(t_s,t_f)$ and $(t_s,t)$, respectively, where $t_f>t_s+\frac{a_2-a_1}{c_0}$.  Following the same steps in the proof, we then obtain \eqref{donesubcretical} for any $t\in\left(t_s+\frac{a_2-a_1}{c_0},\,t_f\right)$. This extension is particularly important when hypothesis \eqref{subcriticalflow} does not hold in the neighborhood of the initial data $t=0.$ See {\bf \em Test 2}.
\end{remark}
 
\begin{remark}\label{r3}
The sufficient condition \eqref{condition on bottom} that guarantees a positive discharge \eqref{condition} requires considering bed profiles with a few variations (see Figure 1.10 in \cite{lannes2013water}) and a maximum amplitude $\max b$ remains small relative to the wave speeds $c_1,\;c_2$. In particular, this scenario meets the cases of detecting low-amplitude bed obstacles such as those considered in \cite{tam2015predicting,angel2024bathymetry,gessese2011reconstruction,khan2021variational}.
\end{remark}

\begin{remark}\label{problem in condition}
Theorem \ref{subcriticaltheorem} provides a general condition to guarantee \eqref{donesubcretical}. However, in open channels where the fluid flows from upstream $a_1$ to downstream $a_2$, the left-going wave speed $c_2$ can be small. Hence, the term involving the bottom variation in \eqref{condition on bottom} may result in a large value if $\max b$ is not sufficiently small. Furthermore, the well-posedness and observation time  $t_f$ should be large enough. 
\end{remark}

The following result illustrates a sufficient condition requiring only inlet data. This hypothesis is motivated by Remark \ref{problem in condition} and by the aim of considering transcritical flow as well.  We assume that there exist positive constant $c_1>0$ such that for all $(t,x)\in (0,t_f)\times I $ , we have 
\begin{equation}\label{transcriticallflow}
\displaystyle \lambda_1(t,x)\geq c_1,\\[6pt]
\end{equation}
where $\lambda_1$ is given by \eqref{subcriticalflow}.
\begin{corollary}\label{transflow}
Let $\varepsilon>0$ such that $h(t,x)\geq \varepsilon,\; \forall (t,x)\in(0,t_f)\times I$. Assume that the bed profile $b\in C^1(I)$ and that the fluid flow satisfies the hypothesis \eqref{transcriticallflow}. If there exists $\rho >0$, such that for any $t\in \left(\frac{a_2-a_1}{c_1},t_f\right)$, the following condition holds:
\begin{equation}\label{condition on bottom modified}
\displaystyle u(s,a_1)\geq \rho  -2\sqrt{g}\left(\sqrt{h(s,a_1)}-\sqrt{\max_{(0,t_f)\times I}h}\right)+\frac{g}{c_1}\displaystyle\sum_{i\geq 1} b(\gamma_{i+1})-b(\gamma_{i}),\quad\forall s\in (0,t), \\[6pt]
\end{equation}
where  $(\gamma_i)_i $ are such that $b'(\gamma_i)=0$ and $\bigcup_{i\geq 1} (\gamma_i,\gamma_{i+1})=\left\{x\in I,\;\;  b'(x)>0\right\}$. Then,  for any $t\in \left(\frac{a_2-a_1}{c_1},t_f\right)$, we have 
\begin{equation}\label{donesubcreticalmod}
\displaystyle q(t,x)\geq \varepsilon\,\rho ,\quad \forall x \in I.
\end{equation}
\end{corollary}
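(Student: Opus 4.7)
The strategy is to adapt the proof of Theorem \ref{subcriticaltheorem} by discarding the left-going (downstream) characteristic and compensating for its absence through the uniform upper bound on the depth. Since the only hypothesis at our disposal is $\lambda_1\geq c_1$, we can only trace information back to the upstream boundary $a_1$ along a single family of characteristics, so the bound on $u(t,x)$ obtained from $R_1$ alone must carry the full weight of the estimate.

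First, I would reuse verbatim the reformulation \eqref{thp1}--\eqref{thp2} and fix an arbitrary $(t,x)\in\left(\frac{a_2-a_1}{c_1},t_f\right)\times I$. The hypothesis \eqref{transcriticallflow} gives $X_1(0)\leq x-c_1 t<a_1$, so by strict monotonicity of $X_1$ there is a unique $\tau_1\in(0,t)$ with $X_1(\tau_1)=a_1$. Integrating the first equation of \eqref{thp2} from $\tau_1$ to $t$ produces
\begin{equation*}
u(t,x)+2\sqrt{g\,h(t,x)} \;=\; u(\tau_1,a_1)+2\sqrt{g\,h(\tau_1,a_1)}-g\int_{\tau_1}^{t}b'\!\left(X_1(s)\right)\,\dd s.
\end{equation*}

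Second, the change-of-variable argument \eqref{bottomintegral} from the proof of Theorem \ref{subcriticaltheorem} bounds this integral by $\frac{1}{c_1}\int_{\{x\in I:\,b'(x)>0\}}b'(r)\,\dd r$, and, since $\{b'>0\}$ is open, writing it as the disjoint union $\bigcup_{i\geq 1}(\gamma_i,\gamma_{i+1})$ with $b'(\gamma_i)=0$ collapses the integral to the Riemann sum $\sum_{i\geq 1}\bigl(b(\gamma_{i+1})-b(\gamma_i)\bigr)$. The only novelty is the final isolation of $u(t,x)$: using $h(t,x)\leq \max_{(0,t_f)\times I}h$ on the left-hand side gives
\begin{equation*}
u(t,x)\;\geq\; u(\tau_1,a_1)+2\sqrt{g}\left(\sqrt{h(\tau_1,a_1)}-\sqrt{\max_{(0,t_f)\times I}h}\right)-\frac{g}{c_1}\sum_{i\geq 1}\bigl(b(\gamma_{i+1})-b(\gamma_i)\bigr).
\end{equation*}

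Third, applying the hypothesis \eqref{condition on bottom modified} at $s=\tau_1\in(0,t)$ immediately yields $u(t,x)\geq \rho$, whence $q(t,x)=h(t,x)\,u(t,x)\geq \varepsilon\rho$, which is \eqref{donesubcreticalmod}. I do not expect any real obstacle here; the proof is a faithful one-characteristic version of Theorem \ref{subcriticaltheorem}. The only subtlety worth noting is that, because the $R_2$ invariant no longer contributes, the factor $2$ that appears in the conclusion of Theorem \ref{subcriticaltheorem} (arising from summing the two Riemann-invariant inequalities) disappears, which is consistent with the sharper bound $q\geq\varepsilon\rho$ in place of $\varepsilon\rho/2$.
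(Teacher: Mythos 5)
Your proposal is correct and follows exactly the route the paper intends: the paper's own proof is a one-line reference to inequality \eqref{R1exprineq} in the proof of Theorem \ref{subcriticaltheorem}, and your argument is precisely the fleshed-out version of that, using only the $R_1$ characteristic traced back to $a_1$, the bound \eqref{bottomintegral}, and $h(t,x)\leq\max_{(0,t_f)\times I}h$ to isolate $u(t,x)\geq\rho$. Your observation about the disappearance of the factor $2$ (and hence the sharper bound $\varepsilon\rho$ versus $\varepsilon\rho/2$) is also consistent with the stated conclusion.
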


\begin{proof}
    The result follows directly from the proof of Theorem \ref{subcriticaltheorem}, in particular from inequality \eqref{R1exprineq}.
\end{proof}

\begin{remark}
    Compared to condition \eqref{condition on bottom}, condition \eqref{condition on bottom modified} is stated in terms of the difference between the inlet depth and the maximum depth, rather than the change of flow depth between the inlet and the outlet \eqref{condition on bottom}. However, both criteria quantify the relative change of the flow depth from the inlet depth. Furthermore, numerical experiments in Section \ref{sec:numericalexpr} confirm that the sufficient condition \eqref{condition on bottom modified} holds, including the hydraulic falls scenarios \cite{tam2015predicting}.
\end{remark}

\begin{remark}
   Although the sufficient condition  \eqref{condition on bottom modified} imposes constraints on the smoothness and amplitude of the bathymetry, it should be mentioned that this term is controlled by the right-going wave speed $c_1$.
\end{remark}

We note that the conclusions of Remarks \ref{r1}-\ref{problem in condition} extend directly to the above Corollary \ref{transcriticallflow}.

Two sufficient conditions \eqref{condition on bottom}, \eqref{condition on bottom modified} that ensure the validity of the direct-approach assumption \eqref{condition} are introduced. These conditions rely on an inequality that balances the inlet or outlet data with the smoothness of the bottom and the speed of wave propagation. Alternatively, assumption \eqref{condition} can be verified directly by discretizing the first equation in  \eqref{sh4} and employing time-series data of inlet discharge and free surface elevation over the interval $(0,t_f)$, see Remark \ref{abouttheinversemodel}. We now proceed to analyze the sensitivity of our analytic reconstruction method.

\subsection{Sensitivity analysis}\label{stabilityanalysis}
We demonstrate that the mapping from the free surface measurements \eqref{surafcemeasurement} to the reconstructed bed profile $b$ is Lipschitz continuous in the $L^1-$ norm.

Assuming that one of the hypotheses \eqref{subcriticalflow}–\eqref{condition on bottom} or \eqref{transcriticallflow}–\eqref{condition on bottom modified} is satisfied. Then, we have \eqref{condition}: there exists $\beta>0$ and $t^{\star}\in(0,t_f)$ such that
$$q(t^{\star},x)\geq \beta,\quad \forall x\in I.$$ 
Since $\left(q, \partial_t q\right)|_{\{t^{\star}\}\times I}$ are linked to $\left(\partial_t\zeta,\partial_t^2\zeta\right)|_{\{t^{\star}\}\times I}$ by the inlet discharge through the first and second equations in \eqref{sh4}, respectively, without loss of generality, we investigate the sensitivity using the measurements
\begin{equation}\label{Ms}
\displaystyle M_s=\left(b(a_1),\left(\zeta,q,\partial_t q\right)|_{\{t^{\star}\}\times I}\right).\\[6pt]
\end{equation}
Let $\widetilde{M_s}$ denotes a perturbed measurement of $M_s$, and let $\widetilde{\phi}$ be the corresponding error solution of \eqref{sh4}. Then, we have the following two systems associated with $M_s$ and $\widetilde{M_s}$, respectively:
\begin{equation}\label{exactsystem}
\begin{cases}
\displaystyle \partial_x \phi(t^{\star},.)=-g \frac{\partial_x \zeta(t^{\star},.)\, q^2(t^{\star},.)}{\phi(t^{\star},.)}\,-\,\partial_t q(t^{\star},.),\quad \text{in}\; I,\\[6pt]
\displaystyle \phi(t^{\star},a_1)=\frac{q^2(t^{\star},a_1)}{h(t^{\star},a_1)},
\end{cases}
\end{equation}
\begin{equation}\label{errorsystem}
\begin{cases}
\displaystyle \partial_x \widetilde{\phi}(t^{\star},.)=-g \frac{\partial_x \widetilde{\zeta}(t^{\star},.)\, (\widetilde{q}(t^{\star},.))^{2}}{\widetilde{\phi}(t^{\star},.)}\,-\,\partial_t\widetilde{ q}(t^{\star},.),\quad \text{in}\;I,\\[6pt]
\displaystyle \widetilde{\phi}(t^{\star},a_1)=\frac{(\widetilde{q})^{2}(t^{\star},a_1)}{\widetilde{h}(t^{\star},a_1)}.
\end{cases}
\end{equation}

Needless to say that throughout this section, the solution is assumed to be sufficiently regular to make sense of the terms of the above systems \eqref{exactsystem} and \eqref{errorsystem}. Such regularity is assured as soon as we place ourselves in the conditions of Theorem 2.25 in \cite{iguchi2021hyperbolic} since then the solution belongs to $\mathbb{W}^m(t_f), \, m \ge 2$.

For simplicity,  we omit in the following the explicit dependence on $t^{\star}$. However, it should be mentioned that the analysis refers to the fixed instant $t^{\star}$. In addition, the perturbations $\widetilde{q}$ and $\partial_t\widetilde{q}$ arise directly from the  errors $\partial_t \widetilde{\zeta}$ and $\partial_t^2 \widetilde{\zeta}$ via the first and the second equations in \eqref{sh4}, respectively. In particular, $\partial_t\widetilde{q}$ is defined by 
$$\partial_t\widetilde{q}(x):=\partial_t\widetilde{q}(t^{\star},x) = \partial_t \widetilde{q}(t^{\star},a_1)-\displaystyle\int_{a_1}^x\partial_{t}^2\widetilde{\zeta}(t^{\star},y)\,{\dd}y,\;\;\;\forall x\in I.$$
Using this simplified notation, we have the following stability estimate.
\begin{proposition}\label{propositionstability}
Let $b,\;\widetilde{b}\in W^{1,1}(I)$, $\zeta,\,\widetilde{\zeta}\in W^{1,\infty}(I)$;  $q,\,\widetilde{q},\,\partial_t q,\,\partial_t\widetilde{ q} \in W^{1,1}(I)$, and  $\varepsilon>0$ such that $h,\,\widetilde{h}\geq \varepsilon$ . Under the assumption \eqref{condition}, we have Lipschitz stability for the detection of the bed profile using $M_s$ \eqref{Ms}. Precisely, there exist positive constants $C_1,\,C_2,\,C_3,\,C_4>0$, such that
\begin{equation}\label{stability}
\displaystyle \left\Vert b-\widetilde{b}\right\Vert_{L^1(I)}\leq C_{1,2} \,E+\frac{1}{\beta^2}\left\Vert \widetilde{h}(q+\widetilde{q})\right\Vert_{L^{\infty}(I)}\,\left\Vert q-\widetilde{q}\right\Vert_{L^{1}(I)}+\left\Vert \zeta-\widetilde{\zeta}\right\Vert_{L^1(I)},\\[6pt]
\end{equation}
where
\begin{equation}\label{C11}
    C_{1,2}=\frac{1}{C_1}\biggl(e^{C_2(a_2-a_1)}-1\biggr),
\end{equation}

\begin{equation}\label{E}
\begin{aligned}
E&= \frac{q^2(a_1)}{h(a_1)\widetilde{h}(a_1)}\Bigl(\left|b(a_1)-\widetilde{b}(a_1)\right|+\left|\zeta(a_1)-\widetilde{\zeta}(a_1)\right|\Bigr)+\frac{q(a_1)+\widetilde{q}(a_1)}{\widetilde{h}(a_1)}\left|q(a_1)-\widetilde{q}(a_1)\right|\\
&\quad\quad+C_3 \left\Vert q-\widetilde{q}\right\Vert_{L^{1}(I)}+C_4 \left\Vert \partial_x(\zeta-\widetilde{\zeta})\right\Vert_{L^{1}(I)}+ \left\Vert \partial_t q-\partial_t\widetilde{ q}\right\Vert_{L^{1}(I)},\\
\end{aligned}
\end{equation}
and 
\begin{equation}\label{stabilitycanst}
\begin{cases}
\displaystyle C_1=g\left\Vert \partial_x\zeta\right\Vert_{L^{\infty}(I)},\\[6pt]
\displaystyle C_2=\frac{g}{\beta^2}\left\Vert \partial_x\zeta\right\Vert_{L^{\infty}(I)}\left\Vert h\,\widetilde{h}\right\Vert_{L^{\infty}(I)},\\[6pt]
\displaystyle C_3=\frac{g}{\beta^2}\left\Vert \widetilde{h}\right\Vert_{L^{\infty}(I)}\,\left\Vert \left(\partial_x\widetilde{\zeta}\right)\left(q+\widetilde{q}\right)\right\Vert_{L^{\infty}(I)},\\[6pt]
\displaystyle C_4=\frac{g}{\beta^2}\left\Vert \widetilde{h}\right\Vert_{L^{\infty}(I)}\,\left\Vert q^2\right\Vert_{L^{\infty}(I)}.\\[6pt]
\end{cases}
\end{equation}
\end{proposition}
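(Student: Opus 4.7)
The strategy is to reduce $\|b-\widetilde{b}\|_{L^{1}}$ to an $L^{1}$-estimate on $\psi:=\phi-\widetilde{\phi}$, and then control $\psi$ through a scalar linear ODE solved by Gronwall. Since $b=\zeta-h$ and $h=q^{2}/\phi$, a direct manipulation of $h-\widetilde{h}=q^{2}/\phi-\widetilde{q}^{2}/\widetilde{\phi}$ (adding and subtracting $q^{2}\phi$ in the numerator over the common denominator) yields
$$h-\widetilde{h}=\frac{h\widetilde{h}}{\widetilde{q}^{2}}(\widetilde{\phi}-\phi)+\frac{\widetilde{h}\,(q-\widetilde{q})(q+\widetilde{q})}{\widetilde{q}^{2}}.$$
Using $\widetilde{q}\geq\beta$ (tacitly contained in assumption \eqref{condition} applied to the perturbed system) and integrating $|b-\widetilde{b}|\leq|\zeta-\widetilde{\zeta}|+|h-\widetilde{h}|$ over $I$, the last term already reproduces the $\frac{1}{\beta^{2}}\|\widetilde{h}(q+\widetilde{q})\|_{L^{\infty}}\|q-\widetilde{q}\|_{L^{1}}$ contribution of \eqref{stability}; the whole remaining work is thus to obtain $\|\psi\|_{L^{1}}\leq C_{1,2}\,(\beta^{2}/\|h\widetilde{h}\|_{L^{\infty}})\,E$.

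To derive the ODE for $\psi$, I would subtract \eqref{errorsystem} from \eqref{exactsystem} and split the nonlinear difference as
$$\frac{\partial_{x}\zeta\,q^{2}}{\phi}-\frac{\partial_{x}\widetilde{\zeta}\,\widetilde{q}^{2}}{\widetilde{\phi}}=\partial_{x}\zeta\,q^{2}\Bigl(\frac{1}{\phi}-\frac{1}{\widetilde{\phi}}\Bigr)+\frac{\partial_{x}\zeta\,q^{2}-\partial_{x}\widetilde{\zeta}\,\widetilde{q}^{2}}{\widetilde{\phi}}.$$
The first piece equals $-\partial_{x}\zeta\,(h\widetilde{h}/\widetilde{q}^{2})\,\psi$ and, absorbed to the left-hand side, gives an $x$-dependent coefficient bounded by $C_{2}$. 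The second piece is split further via $\partial_{x}\zeta\,q^{2}-\partial_{x}\widetilde{\zeta}\,\widetilde{q}^{2}=q^{2}\,\partial_{x}(\zeta-\widetilde{\zeta})+\partial_{x}\widetilde{\zeta}(q-\widetilde{q})(q+\widetilde{q})$; after dividing by $\widetilde{\phi}=\widetilde{q}^{2}/\widetilde{h}$ and using $\widetilde{q}\geq\beta$, its modulus is bounded pointwise by $(C_{4}/g)|\partial_{x}(\zeta-\widetilde{\zeta})|+(C_{3}/g)|q-\widetilde{q}|$. One therefore arrives at $\partial_{x}\psi=a(x)\psi+F(x)$ with $|a(x)|\leq C_{2}$ and $|F(x)|\leq C_{3}|q-\widetilde{q}|+C_{4}|\partial_{x}(\zeta-\widetilde{\zeta})|+|\partial_{t}q-\partial_{t}\widetilde{q}|$.

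The integrating-factor method then gives $|\psi(x)|\leq e^{C_{2}(x-a_{1})}|\psi(a_{1})|+\int_{a_{1}}^{x}e^{C_{2}(x-y)}|F(y)|\,\dd y$. Integrating in $x$ over $I$ and swapping the order of integration by Fubini in the double integral produces, in both the boundary and source contributions, exactly the factor $C_{2}^{-1}(e^{C_{2}(a_{2}-a_{1})}-1)$, so that $\|\psi\|_{L^{1}(I)}\leq C_{2}^{-1}(e^{C_{2}(a_{2}-a_{1})}-1)\bigl[|\psi(a_{1})|+\|F\|_{L^{1}}\bigr]$. Inserted into the earlier bound on $\|h-\widetilde{h}\|_{L^{1}}$, the coefficient $\|h\widetilde{h}\|_{L^{\infty}}/(\beta^{2}C_{2})$ simplifies algebraically to $1/C_{1}$, turning the prefactor into exactly $C_{1,2}$ of \eqref{C11}.

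Finally, the boundary term $|\psi(a_{1})|$ is expanded from $\phi(a_{1})=q^{2}(a_{1})/h(a_{1})$ by adding and subtracting $q^{2}(a_{1})h(a_{1})$ in $q^{2}(a_{1})/h(a_{1})-\widetilde{q}^{2}(a_{1})/\widetilde{h}(a_{1})$, and the estimate $|h(a_{1})-\widetilde{h}(a_{1})|\leq|\zeta(a_{1})-\widetilde{\zeta}(a_{1})|+|b(a_{1})-\widetilde{b}(a_{1})|$ then recovers the first two terms of $E$ in \eqref{E}; the other three terms of $E$ come from the $L^{1}$-bound of $F$. The main obstacle is the choice of the splitting in the nonlinear difference: it must be done in the precise way above so that the growth coefficient becomes exactly $C_{2}$ and, crucially, so that the combination $\|h\widetilde{h}\|_{L^{\infty}}/(\beta^{2}C_{2})$ collapses to $1/C_{1}$ — without which the appearance of $C_{1}$ (and not $C_{2}$) in $C_{1,2}$ would seem accidental. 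A minor but essential side remark is that the bound $1/\widetilde{q}^{2}\leq 1/\beta^{2}$ requires assuming that the nondegeneracy \eqref{condition} is preserved by the perturbation, which is the standard small-noise hypothesis underlying Lipschitz stability for this kind of ill-posed inverse problem.
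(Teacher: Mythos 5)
Your proposal is correct and follows essentially the same route as the paper's Appendix~\ref{sec:A}: the same splitting of the nonlinear difference into a term proportional to $\phi-\widetilde{\phi}$ with coefficient bounded by $C_2$ plus a source bounded by $C_3|q-\widetilde{q}|+C_4|\partial_x(\zeta-\widetilde{\zeta})|+|\partial_t q-\partial_t\widetilde{q}|$, the same expansion of the boundary term $\phi(a_1)-\widetilde{\phi}(a_1)$, and a Gr\"{o}nwall-type bound producing the factor $C_2^{-1}\bigl(e^{C_2(a_2-a_1)}-1\bigr)$ that combines with $\|h\,\widetilde{h}\|_{L^{\infty}(I)}/\beta^2$ to give exactly $C_{1,2}$ (your integrating-factor-plus-Fubini step is just the explicit form of the integral Gr\"{o}nwall inequality the paper invokes). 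Your closing remark is also accurate: the paper bounds $1/\widetilde{q}^{2}$ by $1/\beta^{2}$ without explicitly stating that the nondegeneracy \eqref{condition} holds for the perturbed discharge $\widetilde{q}$, so that hypothesis is indeed tacit there as well.
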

\begin{proof}
    For ease of reading, the detailed proof is moved to Appendix ~\ref {sec:A}.  Its core argument applies Grönwall’s inequality \cite{evans2022partial} to establish an $L^1$ stability bound under perturbations of the variable coefficients and initial Cauchy data of the ODE \eqref{exactsystem}. 
\end{proof}
We end this section with the following remark.
\begin{remark}\label{stability problm}
    In the above stability result \eqref{stability}, the constant $C_{1,2}$ \eqref{C11} is an exponential constant that depends on the Lipschitz constant $\bigl\|\partial_x\zeta\bigr\|_{L^{\infty}(I)}$.  Specifically, as $\bigl\|\partial_x\zeta\bigr\|_{L^{\infty}(I)}$ grows due to roughness or noise in the free surface data $\zeta$, this stability constant increases sharply. This reflects the impact of the smoothness of the free surface on the detection of bed topography. Nevertheless, in our numerical tests, this issue is resolved by applying a smoother to prevent the introduction of large gradient values of the free surface in the inverse model \eqref{sh4}; see the details in  {\bf \em Test 5}.
\end{remark}

\section{Numerical results}\label{sec:numericalexpr}
We divide this section into three subsections. In the first Subsection \ref{discritizationofforwared}, we briefly recall the finite‑volume discretization \cite{kurganov2007second} that we employ to approximate the forward model \eqref{sh2} and thus produce the required free surface measurements \eqref{surafcemeasurement}. In Section \ref{inversedisc}, we provide the detailed algorithm for approximating the inverse model \eqref{sh4} using a second-order finite-difference scheme. Finally, Section \ref{testn} presents five numerical tests that confirm the validity of the assumption of nondegeneracy \eqref{condition} and its sufficient conditions \eqref{condition on bottom} -\eqref{condition on bottom modified}. Furthermore, these tests demonstrate the high accuracy of the method and stability under noisy measurements. The Python script accompanying this section is given in \cite{codepyhon}.

\subsection{Discretization of the forward system}\label{discritizationofforwared}
We adopt the finite-volume scheme, detailed in \cite{kurganov2007second,kurganov2002central}, which simultaneously preserves steady states and non-negativity of fluid depth $h$ by incorporating a correction for dry regions. For further details, see the primary reference for this subsection \cite{kurganov2007second}. A corresponding Python script is provided in \cite{codepyhon} (file name: needed functions for code).

Based on \cite{kurganov2007second}, we first express the system \eqref{sh2} in the following equivalent form:
\begin{equation}\label{sh6}
\begin{cases}
\displaystyle \partial_{t}\zeta + \partial_{x}q = 0,\\[6pt]
\displaystyle \partial_{t}q + \partial_x\left(\frac{q^2}{\zeta-b}+\frac{g}{2}\left(\zeta-b\right)^2\right)= -g\left(\zeta-b\right)\, b'.
\end{cases}
\end{equation}
Let $U=(\zeta,q)^T$ and define 
$$F(U,b)=\left(q,\frac{q^2}{\zeta-b}+\frac{g}{2}\left(\zeta-b\right)^2\right)^T\quad\text{and}\quad S=\left(0,-g\left(\zeta-b\right)\, b'\right)^T.$$
Then, the above system \eqref{sh6} can be written in the following vector form
\begin{equation*}
\displaystyle \partial_{t}U + \partial_{x}F = S.\\[6pt]
\end{equation*}
For the sake of simplicity, we denote the grid points by $x_{\alpha}=\alpha\Delta x$, where $\Delta x$ is a chosen spatial step, and we denote finite-volume cells by $I_j=[x_{j-\frac{1}{2}},x_{j+\frac{1}{2}}]$. Then, the main steps of the second-order central-upwind scheme \cite{kurganov2007second} can be summarized as follows.\\
The semi-discrete system of ODEs is given by
\begin{equation}\label{algo1}
\displaystyle \frac{{\dd}}{{\dd} t}\overline{U}_j(t)=-\frac{H_{j+\frac{1}{2}}(t)-H_{j-\frac{1}{2}}(t)}{\Delta x}+\overline{S}_j(t),\quad \forall t\in (0,t_f),
\end{equation}
where $$\overline{U}_j(t)\approx \frac{1}{\Delta x}\displaystyle\int_{I_j}U(t,x)\,{\dd}x, \quad \forall t\in (0,t_f).$$
To allow for discontinuous bathymetry, we approximate the bottom profile using a locally piecewise-linear approximation over each cell:
\begin{equation}\label{algo2}
\displaystyle \breve{b}(x)=b_{j-\frac{1}{2}}+\left(b_{j+\frac{1}{2}}-b_{j-\frac{1}{2}}\right)\frac{x-x_{j-\frac{1}{2}}}{\Delta x},\quad \quad  x\in I_j.
\end{equation}
where
\begin{equation}\label{algo3}
\displaystyle b_{j+\frac{1}{2}}:=\frac{b(x_{j+\frac{1}{2}}+0)+b(x_{j+\frac{1}{2}}-0)}{2}.
\end{equation}
Here,  $x_{j+\frac{1}{2}}+0$ and $x_{j+\frac{1}{2}}-0$ are the right and left limits at $x_{j+\frac{1}{2}}$, respectively \cite{kurganov2007second}. Then
\begin{equation}\label{algo4}
\displaystyle b_j:=\breve{b}(x_j)= \frac{1}{\Delta x}\displaystyle\int_{I_j} \breve{b}(x)\,{\dd}x=\frac{b_{j+\frac{1}{2}}+b_{j-\frac{1}{2}}}{2}.
\end{equation}
For ease of reading, we omit the explicit time dependence ($t$) in the following steps. Using \eqref{algo3} and \eqref{algo4}, the terms in \eqref{algo1} are given by
\begin{equation}\label{algo5}
\displaystyle \overline{S}_j^{(2)}\approx -g\left(\overline{\zeta}_j-b_j\right)\frac{b_{j+\frac{1}{2}}-b_{j-\frac{1}{2}}}{\Delta x}
\end{equation}
and 
\begin{equation}\label{algo6}
\begin{aligned}
H_{j+\frac{1}{2}}&=\frac{A^{+}_{j+\frac{1}{2}}F\left(U^{-}_{j+\frac{1}{2}},b_{j+\frac{1}{2}}\right)-A^{-}_{j+\frac{1}{2}}F\left(U^{+}_{j+\frac{1}{2}},b_{j+\frac{1}{2}}\right)}{A^{+}_{j+\frac{1}{2}}-A^{-}_{j+\frac{1}{2}}} \\
&\quad\quad+\frac{A^{+}_{j+\frac{1}{2}}\,A^{-}_{j+\frac{1}{2}}}{A^{+}_{j+\frac{1}{2}}-A^{-}_{j+\frac{1}{2}}}\left(U^{+}_{j+\frac{1}{2}}-U^{-}_{j+\frac{1}{2}}\right).
\end{aligned}
\end{equation}
Here, 
\begin{equation}\label{algo7}
U^{\pm}_{j+\frac{1}{2}}=\overline{U}_{j+\frac{1}{2}\pm\frac{1}{2}}\mp\frac{\Delta x}{2}\left(\partial_x U\right)_{j+\frac{1}{2}\pm\frac{1}{2}},
\end{equation}
where $(\partial_x U)_j$ is computed using the generalized minmod limiter 
\begin{equation}\label{algo8}
(\partial_x U)_j=\minmod\left(\theta\frac{\overline{U}_j-\overline{U}_{j-1}}{\Delta x},\frac{\overline{U}_{j+1}-\overline{U}_{j-1}}{2\Delta x},\theta\frac{\overline{U}_{j+1}-\overline{U}_{j}}{\Delta x}\right), \quad \theta\in[1,2],
\end{equation}
defined by
\[
\minmod(z_1,z_2,z_3) =
\begin{cases}
\min\bigl(z_1,z_2,z_3\bigr), & \text{if } z_i>0,\, \forall i\in\{1,2,3\},\\
\max\bigl(z_1,z_2,z_3\bigr), & \text{if } z_i<0,\, \forall i\in\{1,2,3\},\\
0, & \text{otherwise}.
\end{cases}
\]
In \eqref{algo6}, the local speeds of  propagation at each interface $x_{j+\frac{1}{2}}$ are defined as
\begin{equation}\label{algo9}
\begin{cases} 
    \displaystyle  A^{+}_{j+\frac{1}{2}}=\max\left\{u^{+}_{j+\frac{1}{2}}+\sqrt{g\,h^{+}_{j+\frac{1}{2}}},\;u^{-}_{j+\frac{1}{2}}+\sqrt{g\,h^{-}_{j+\frac{1}{2}}},\;0\right\}   ,\\[6pt]
    \displaystyle  A^{-}_{j+\frac{1}{2}}=\min\left\{u^{+}_{j+\frac{1}{2}}-\sqrt{g\,h^{+}_{j+\frac{1}{2}}},\;u^{-}_{j+\frac{1}{2}}-\sqrt{g\,h^{-}_{j+\frac{1}{2}}},\;0\right\}.
\end{cases}
\end{equation}
It should be mentioned that the central‑upwind algorithm outlined in steps \eqref{algo1}-\eqref{algo9} does not inherently preserve positivity of the local depth values $h_{j+\frac{1}{2}}$. In \cite{kurganov2007second}, the authors introduced corrections to steps \eqref{algo7}-\eqref{algo8} to ensure $h_{j+\frac{1}{2}}>0$. For brevity, the positivity‐preserving details are omitted; we refer the reader to the process (2.15)–(2.21) in \cite{kurganov2007second} or the Python script of this work  \cite{codepyhon}.

For time discretization of \eqref{algo1}, we adopt the third‑order strong stability preserving Runge–Kutta scheme (SSP-RK) \cite{gottlieb2001strong}. Numerical tests indicate that this solver outperforms the first‑order forward Euler method and enables avoiding unnecessary errors in free surface data. We refer the reader to the time discretization using both solvers in \cite{codepyhon}, where the time step is chosen as in \cite[Theorem 2.1]{kurganov2007second}. Specifically,
\begin{equation}\label{algo10}
\Delta t=c\frac{\Delta x}{2 \,A},
\end{equation}
where
$$A=\max_j\left\{ \max\left\{A^{+}_{j+\frac{1}{2}},-A^{-}_{j+\frac{1}{2}}\right\} \right\},\quad \text{and}\quad c\in (0,1).$$

\subsection{ Discretization of the inverse system}\label{inversedisc}
  It should be mentioned that, in practice, acquiring the measurements \eqref{surafcemeasurement} at a time $t^{\star}$ requires knowledge of the free surface elevation in a neighborhood of $t^{\star}$, or, equivalently, a short time series of observations around $t^{\star}$, to approximate the time derivatives accurately. Moreover, from the sufficient conditions $\eqref{condition on bottom}$ and \eqref{condition on bottom modified}, it can be observed that the time at which the nondegeneracy  \eqref{condition} holds lies in the second half of the well‑posedness interval $(0,t_f)$. For this reason, the time  $t^{\star}$ should be chosen close to $t_f$; for example,
$$t^{\star}=t_f-k\,\Delta t,$$
for a small $k\in \mathbb{N}$. In our numerical experiments (detailed in the next subsection), we set $t^{\star}=t_f$.

Without loss of generality, and to allow for different instants of measurements,  we extract the free surface data \eqref{surafcemeasurement} from the computed free surface elevation $\zeta$ over $(0,t_f)$  based on the forward scheme \eqref{algo1}-\eqref{algo10}. Then, we approximate the first equation in \eqref{sh4} using a second-order central finite-difference scheme in time and the explicit second-order Heun method in space (see \cite{butcher2000numerical} and the reference therein). This procedure yields numerical approximations for the discharge $q(t,x)$ and its temporal derivative $\partial_t q(t,x)$ from the measured free surface profile and the upstream discharge, see Remark \ref{abouttheinversemodel}. Next, for a given $t^{\star}$ (for example $t^{\star}=t_f$) \eqref{condition},  we use the explicit second-order Heun scheme to approximate the inverse equation in \eqref{sh4}.

\begin{enumerate}
  \item\label{step:one} Consider the following time mesh:
  $$t_0=0<t_1,t_2,\cdots,t_{N-1}<t_N=t_f,$$
  with $t^{\star}=t_{n^{\star}}$, for some index $n^{\star}\in\{0,1,....,N\}$.  At any time $t_n$, the surface’s time derivative $\partial_t\zeta(t_n,x)$ is approximated  using the following nonuniform central finite‑difference formula
  \begin{equation}\label{invalgo1}
 \partial_t\zeta(t_n,x_j)\approx\left(\partial_t\zeta\right)^n_j=\frac{-\delta_p^2\,\zeta^{n-1}_j+\left(\delta_p^2-\delta_m^2\right)\zeta^n_j+\delta^2_m\,\zeta^{n+1}_j}{\delta_m\,\delta_p\,(\delta_m+\delta_p)},
\end{equation}
where \begin{equation*}
    \begin{cases}
        \delta_p=t_{n+1}-\delta_n,\\
        \delta_m=t_n-t_{n-1}.
    \end{cases}
\end{equation*}
Here, the time step is nonuniform by construction \eqref{algo10}. Furthermore, the values $\zeta^{-1}_j$ and $\zeta^{N+1}_j$ are approximated by introducing a single ghost point just outside each boundary ($0 $ and $t_f$) and reconstructing the necessary value ($\zeta^{-1}_j$ and $\zeta^{N+1}_j$)  via quadratic Lagrange extrapolation from the three adjacent interior grid points.
  \item\label{step:two} We now proceed to solve the first equation in \eqref{sh4}: $\partial_x q=-\partial_t\zeta$.\\
   \begin{equation}\label{invalgo2}
  q_{j+1}^n=q^n_j-\frac{\Delta x}{2}\left(\left(\partial_t\zeta\right)^n_j+\left(\partial_t\zeta\right)^n_{j+1}\right).
\end{equation}
\item\label{step:three} We have $\zeta$ and $q$ on the time-space domain $(0,t_f)\times I$; we then compute the temporal derivative $\partial_t q$ and the spatial derivative $\partial_x\zeta$ by applying the second‑order finite‑difference scheme \eqref{invalgo1}. To approximate $\partial_x\zeta$, we switch the roles of time and space in \eqref{invalgo1}. Moreover, as noted in Remark \ref{abouttheinversemodel}, $\partial_t q$ can be equivalently reconstructed from the knowledge of $\partial_t^2\zeta$ and the time derivative of the inlet discharge $q_t(t,a_1)$, via the second equation in \eqref{sh4}.
\item\label{step:four}The algorithm concludes by solving the inverse equation in \eqref{sh4} using a given measurement time $t^{\star}=t_{n^{\star}}$ \eqref{condition}:
\begin{itemize}
  \item Predictor step
      \begin{equation}\label{invalgo3}
       \begin{cases}
          \Phi_1=-g\,\left(\partial_x\zeta\right)^{n^{\star}}_j\,\left(q^{n^{\star}}_j\right)^2\,\frac{1}{\phi^{n^{\star}}_j}-\left(\partial_t q\right)^{n^{\star}}_j\\
          \Phi_1^{\divideontimes}=\phi^{n^{\star}}_j+\Delta x\,\Phi_1
      \end{cases}
  \end{equation}
  
  \item Corrector step
\begin{equation}\label{invalgo4}
       \Phi_2=-g\,\left(\partial_x\zeta\right)^{n^{\star}}_{j+1}\,\left(q^{n^{\star}}_{j+1}\right)^2\,\frac{1}{\Phi_1^{\divideontimes}}-\left(\partial_t q\right)^{n^{\star}}_{j+1}
  \end{equation}
  \item Update step
  \begin{equation}\label{invalgo5}
       \phi^{n^{\star}}_{j+1}=\phi_j^{n^{\star}}+\frac{\Delta x}{2}\left(\Phi_1+\Phi_2\right).
  \end{equation}
\end{itemize}
 The step \ref{step:four} starts with a given value of upstream bed profile $b(a_1)$. Finally, the bathymetry is given by \eqref{bottomsolution}.
\end{enumerate}

From the above algorithm \ref{step:one}--\ref{step:four}, it is clear that the inverse system is solved solely using the single instant free surface measurements in \eqref{surafcemeasurement}, as well as the inlet discharge and the bed topography at the upstream of the open channel. In the following, we present numerical tests that support this approach and ensure the validity of the nondegeneracy assumption 
\eqref{condition}.

\subsection{Numerical tests}\label{testn}
In the following numerical tests, we consider a computational domain of length 25 m in the open channel (Figure \ref{fig:inverse_domain}). For each test, the bottom profile, initial data, and boundary conditions are prescribed to perform the finite-volume discretization of the forward model \eqref{algo1} -\eqref{algo10}, which produces the surface measurement vector $S_m$ \eqref{surafcemeasurement}. We then inject these surface measurements, along with the inlet discharge and the upstream bed profile, into the inverse algorithm \ref{step:one}-\ref{step:four}, resulting in an approximation of the bottom profile. Before performing the final step of this analytical approach \eqref{invalgo3}-\eqref{invalgo5}, we first verify the general assumption \eqref{condition} and its sufficient conditions \eqref{condition on bottom}-\eqref{condition on bottom modified}. To measure the accuracy of the method, we evaluate the relative $L^{\infty}$ and $L^2$ error norms, defined by
\begin{equation}\label{norms}
    \begin{cases}
        L^{\infty}_r=\frac{\bigl\|b-b_{ex}\bigr\|_{L^{\infty}(I)}}{\bigl\|b_{ex}\bigr\|_{L^{\infty}(I)}},\\
        L^{2}_r=\frac{\bigl\|b-b_{ex}\bigr\|_{L^{2}(I)}}{\bigl\|b_{ex}\bigr\|_{L^{2}(I)}},\\
    \end{cases}
\end{equation}
where $b_{ex}$ stands for the exact bottom profile and $b$ for the approximated bottom profile. For all the following numerical tests, we choose the time step $\Delta t$ according to \eqref{algo10}, where the parameter $c=0.9.$ Furthermore,  $\theta$ in \eqref{algo8} is set to $\theta=1$. \\

\noindent  {\bf \em Test 1: Steady subcritical flow over a bump} \vspace*{0.25cm}
$\quad$\\
We begin our numerical investigation with the classical steady subcritical flow over a bump (see, for example, \cite{liang2009adaptive,gessese2011reconstruction}). The bump profile is given by
\begin{equation}\label{bottombup}
   b(x)=\begin{cases}
        0.2-0.05(x-10)^2,& x\in (8,12),\\
        0,& \text{otherwise,}
    \end{cases}
\end{equation}
while the upstream discharge $q(t,0)$ and downstream depth $h(t,25)$ are fixed to :
\begin{equation}\label{boundarydatat1}
    q(t,0)=4.42 \,\text{m}^3/s \quad \text{and}\quad h(t,25)=2\,\text{m}.
\end{equation}
The initial conditions are set to:
\begin{equation*}
   q(0,x)=4.42\,\text{m}^3/s  \quad \text{and}\quad \zeta(0,x)=2\,\text{m}.
\end{equation*}
We observed numerically that this classical case falls into the strong subcritical regime \eqref{subcriticalflow} with $c_1=6$ and $c_2=1.4$. As noted in Remark \ref{problem in condition}, the left‑going wave speed $c_2$ is smaller than the right‑going wave speed $c_1$.

We discretize the $25$ m channel domain with a uniform grid $\Delta x=\frac{25}{N_x}$, where $N_x=100$, which corresponds to four surface observations per meter.  To guarantee that the flow arrives at steady state, we set the final time of simulation to $t_f=400\,s$. Figure \ref{fig:direct and inverse discharge} shows the discharge profile produced by the direct solver \eqref{algo1}-\eqref{algo10} alongside the discharge reconstructed using the measured free surface elevation and the inlet discharge \eqref{invalgo2}.
\begin{figure}[!htb]
  \centering
  \includegraphics[width=0.7\textwidth]{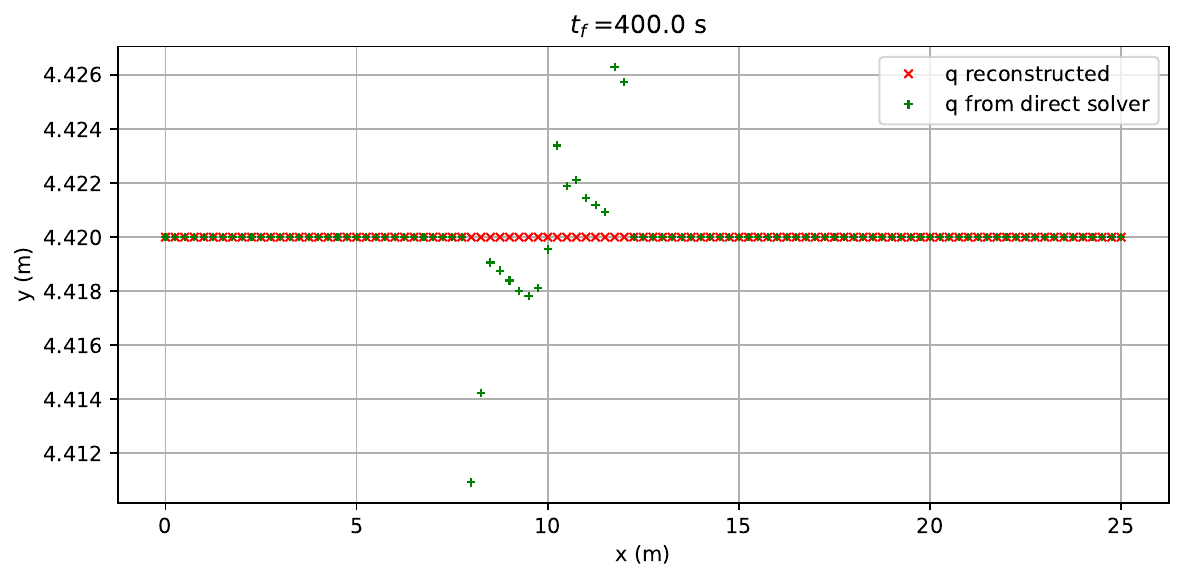}
  \caption{ Comparison of the forward and reconstructed discharge profiles.}
  \label{fig:direct and inverse discharge}
\end{figure}
This Figure \ref{fig:direct and inverse discharge} demonstrates that both the forward‑computed and the reconstructed discharges satisfy the steady state. Moreover, the reconstructed has a significantly smaller deviation from the steady state value $q=4.42$ than the simulated. However, increasing $N_x$ further improves the simulated discharge convergence toward the steady-state discharge. Finally, it should be mentioned that this is a simple case for which the required nondegeneracy  \eqref{condition} holds automatically.

Since we are in the steady state flow case, from \eqref{sh5}-\eqref{simpliciation for discharge} we have an analytic solution for the reconstructed bed profile. In Figure \ref{fig:surface analy bottom}, we plot the actual bathymetry, the analytic solution, and the free surface at time $t_f=400$.
\begin{figure}[!htb]
  \centering
  \begin{minipage}{0.48\textwidth}
    \includegraphics[width=\linewidth]{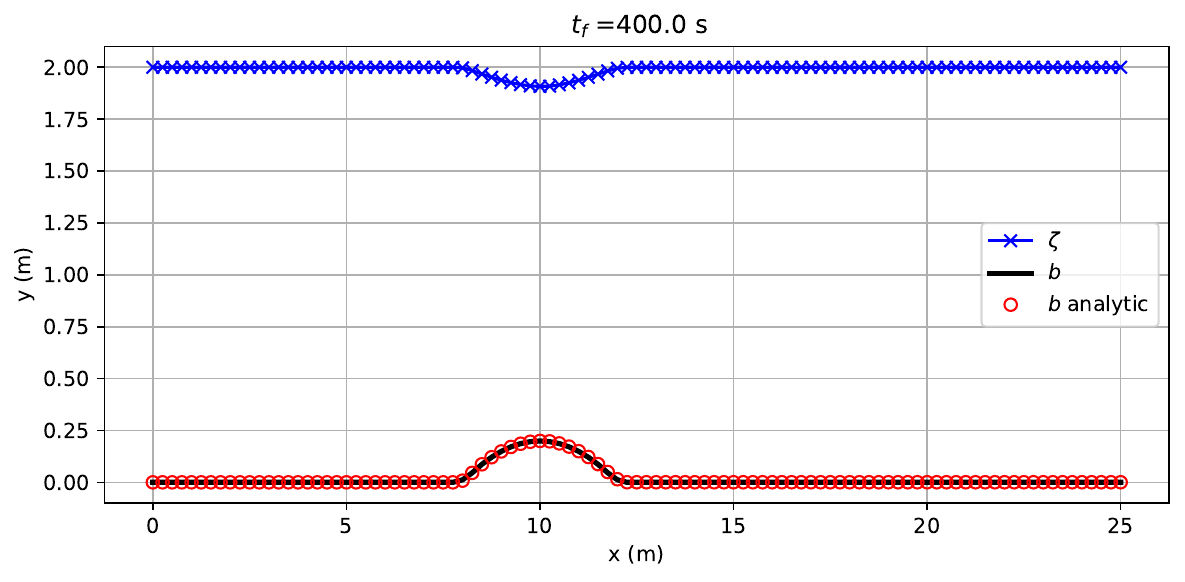}
  \end{minipage}\hfill
  \begin{minipage}{0.48\textwidth}
    \includegraphics[width=\linewidth]{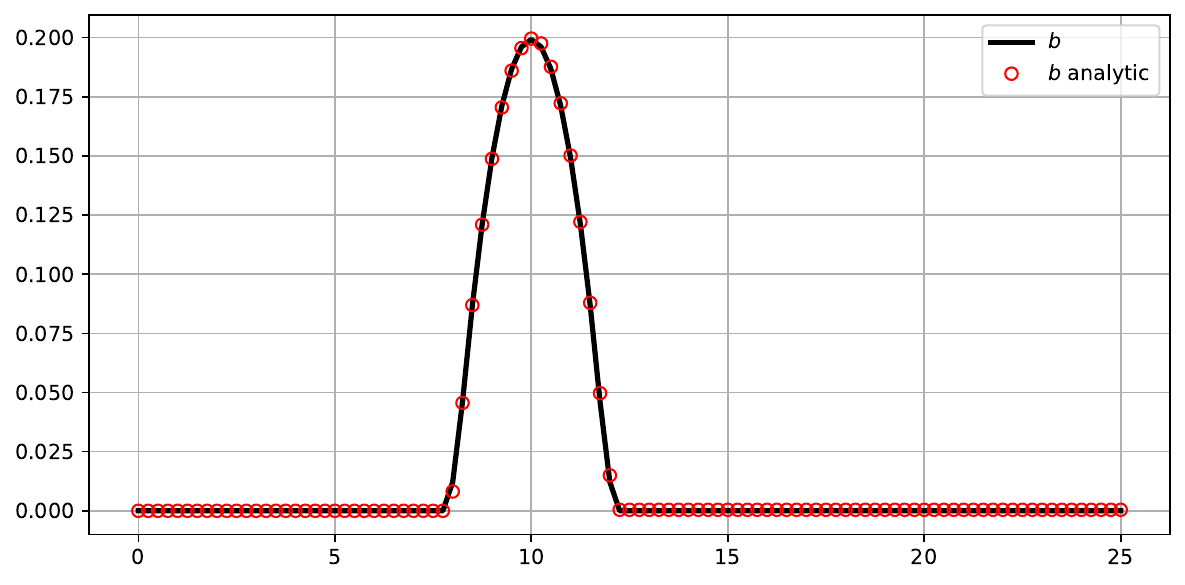}
  \end{minipage}

  \caption{Comparison of the analytic bed topography and true bathymetry.}
  \label{fig:surface analy bottom}
\end{figure}
The relative $L^{\infty}$ and $L^2$ errors defined in \eqref{norms} are very small. Precisely, the relative $L^{\infty}$  error is 1.98 \%, and the relative $L^2$ error is 1.28 \%. Moreover, these errors are only introduced by the forward solver \eqref{algo1}-\eqref{algo10}. Furthermore, when we increase $N_x$, these small errors vanish, yielding an exact reconstruction of the bed profile.  This, in particular, confirms the accuracy of the forward discretization scheme.

For the sake of comparison with \cite{gessese2011reconstruction,gessese2011inferring} and although we have an analytical solution, we apply our general inverse approach \eqref{invalgo1}-\eqref{invalgo5} to this particular case. Figure \ref{fig:surface recon bottom test1} shows that the reconstructed bathymetry is in good agreement with the true channel bed profile. Furthermore, the relative errors \eqref{norms} are small: the relative $L^{\infty}$ error is 5.26 \% and the relative $L^2$ error is 2.26 \%. As in the analytic case, these errors diminish sharply as $N_x$ increases. For example, at $N_x=200$ (which still represents a realistic number of observations), the relative $L^{\infty}$ error drops to 2.58\% and the relative $L^2$ error to 0.98 \%.

\begin{figure}[!htb]
  \centering
  \begin{minipage}{0.48\textwidth}
    \includegraphics[width=\linewidth]{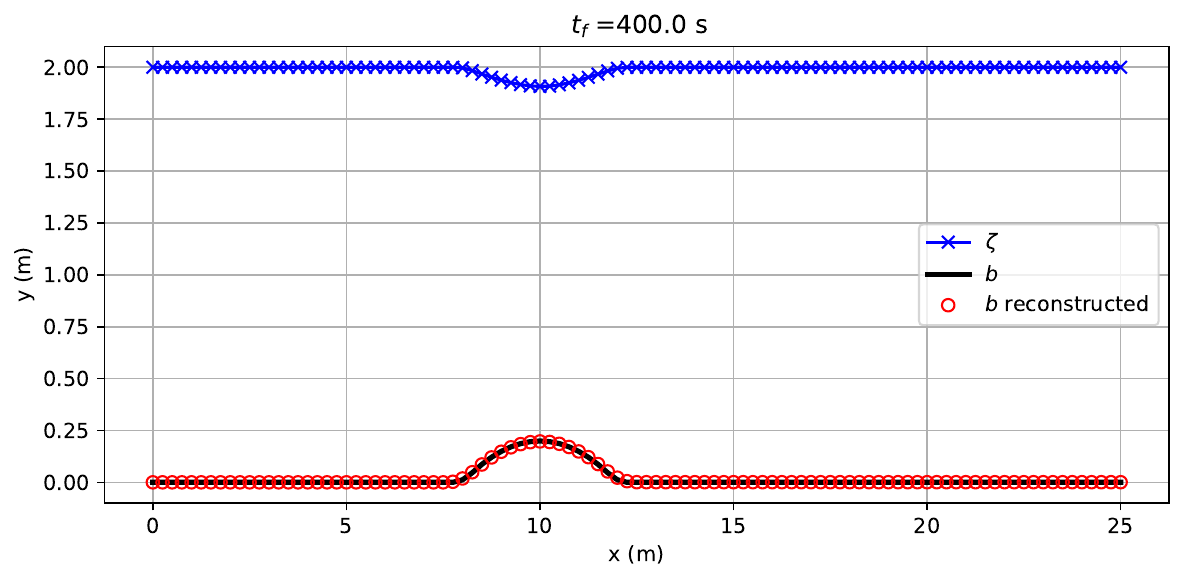}
  \end{minipage}\hfill
  \begin{minipage}{0.48\textwidth}
    \includegraphics[width=\linewidth]{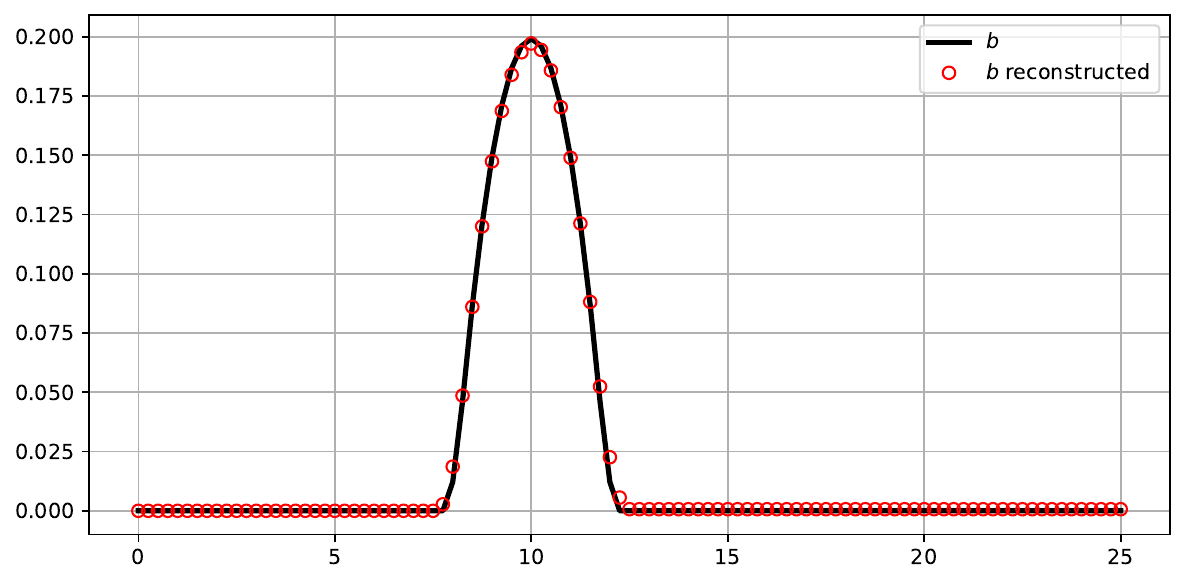}
  \end{minipage}

  \caption{Comparison of the reconstructed bed topography and true bathymetry.}
  \label{fig:surface recon bottom test1}
\end{figure}

Concluding the first test, we emphasize that our method does not depend on the flow having reached the steady state. Even for smaller $t_f$ such as $t_f\geq1\,s$, we obtain similar results and observations compared to the steady state $t_f=400\, s$. Furthermore, it should be noted that, in all our numerical tests for this case, the sufficient conditions \eqref{condition on bottom} and \eqref{condition on bottom modified} are satisfied over the time intervals $(0,t_f)$. For brevity, the numerical details are omitted; for any further numerical tests, we refer the reader to the Python script of this paper  \cite{codepyhon}.\\

\noindent  {\bf \em Test 2:  Unsteady subcritical flow over a bump and  impact of initial data} \vspace*{0.25cm}
$\quad$\\
In this test, we evaluate the adaptability of our method (see Remark \eqref{r1}) by examining a scenario in which the hypothesis \eqref{subcriticalflow} fails locally in time. Moreover,  we illustrate the impact of the chosen initial data on the flow features. 

To enable a direct comparison with the previous test, we retain the bottom topography given by \eqref{bottombup}  and impose the same boundary conditions \eqref{boundarydatat1}. Instead, we initialize the flow with the stationary initial conditions:
\begin{equation}\label{initila2}
    q(0,x)=0\,\text{m}^3/s,  \quad \text{and}\quad \zeta(0,x)=2\,\text{m}
\end{equation}
This scenario reflects an unsteady inflow from upstream meets the stationary flow over $(0,25)$. For the numerical implementation of this case, we use the same spatial step $\Delta x=\frac{25}{100}$; however, we only simulate up to $t_f=50\,s$, well before the steady state is reached, see Figure \ref{fig:direct and inverse discharge1}.

\begin{figure}[!htb]
  \centering
  \includegraphics[width=0.7\textwidth]{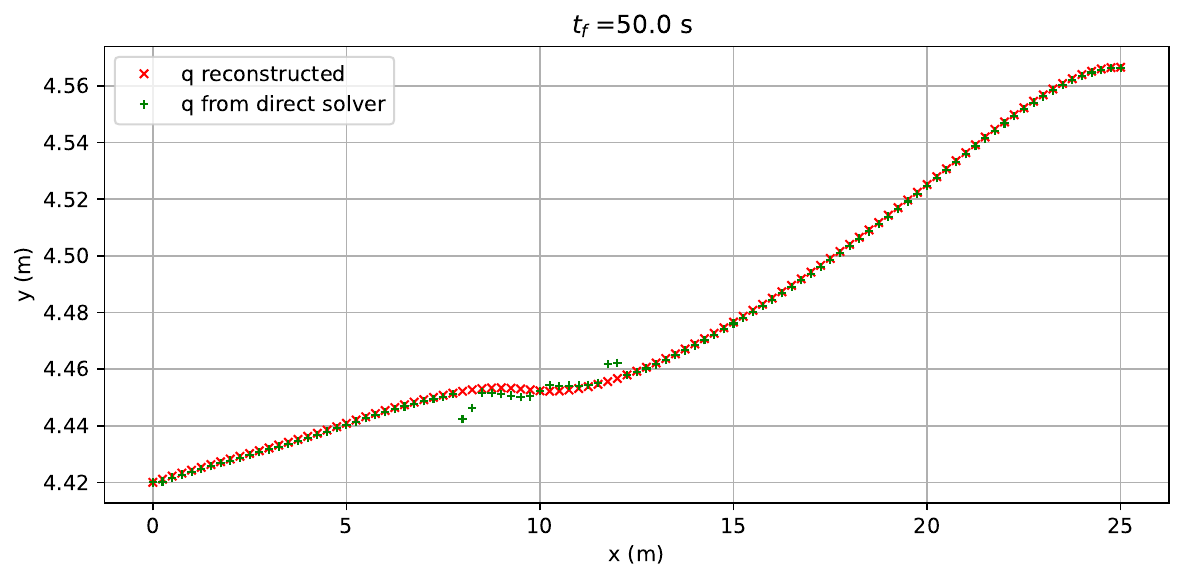}
  \caption{ Comparison of the forward and reconstructed discharge profiles.}
  \label{fig:direct and inverse discharge1}
\end{figure}

In the previous test, the strong subcriticality hypothesis \eqref{subcriticalflow} is verified on $(0,t_f)$ for any $t_f\geq1\,s$. However, due to the modified initial conditions \eqref{initila2}, this hypothesis is invalidated around $t=20\,s$. As illustrated in Figure \ref{fig: lambda 2}, the left‐going characteristic speed $\lambda_2$ crosses zero twice, thereby breaking hypothesis \eqref{subcriticalflow} and thus invalidating the corresponding sufficient condition \eqref{condition on bottom}. Based on  Remark \ref{r1}, we only require \eqref{subcriticalflow} to hold from a shifted time $t_s$. Hence, if we choose $t_s=27.39\,s$, hypothesis \eqref{subcriticalflow} is satisfied for any $t\in(27.39,t_f)$ with $c_1=6.3$ and $c_2=1.15$. Furthermore, $t_s+25/c_0=49.12<t_f$. Note that we can choose any $t_s>27$ and take $t_f$ large enough. Using this shift, both hypotheses \eqref{subcriticalflow} and \eqref{transcriticallflow}, as well as the associated sufficient conditions \eqref{condition on bottom} and \eqref{condition on bottom modified}, are satisfied, which guarantees the nondegeneracy \eqref{donesubcretical} and \eqref{donesubcreticalmod} for any $t\in(49.12,50)$.

\begin{figure}[!htb]
  \centering
  \includegraphics[width=0.7\textwidth]{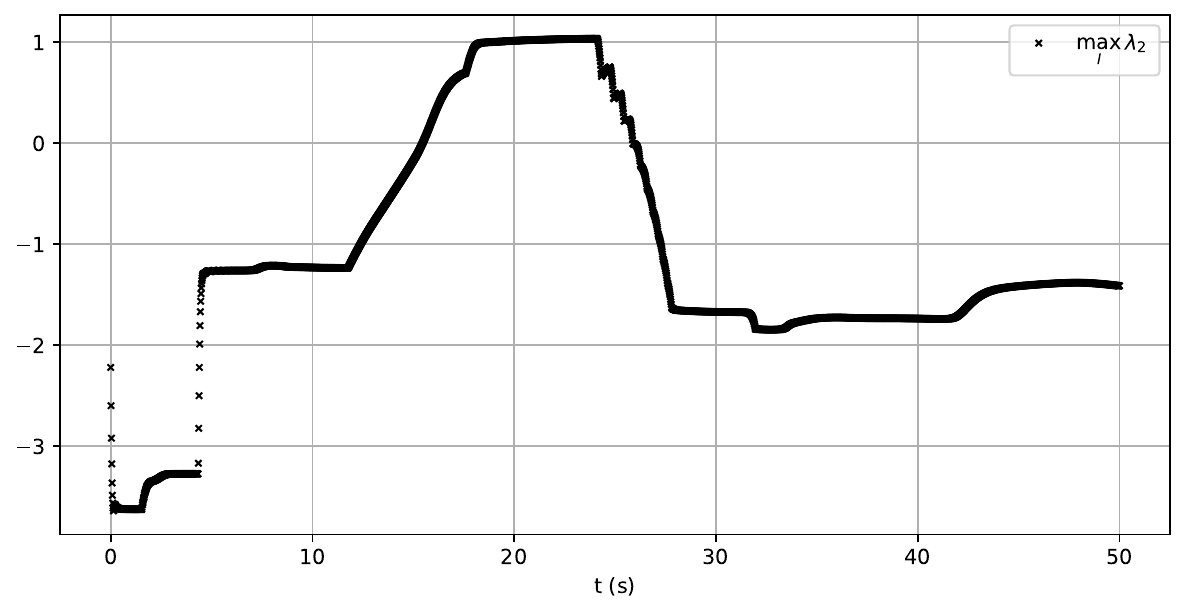}
  \caption{ The maximum in space of the left‐going characteristic speed.}
  \label{fig: lambda 2}
\end{figure}

We now compare the reconstructed and exact bathymetries. Figure \ref{fig:surface recon bottom test2} shows that the bed profile is accurately detected. The relative $L^{\infty}$ error is 5.22\% and the relative $L^2$ error is 3.35\%. These errors decrease rapidly as $N_x$ increases; for example, if we choose $N_x=200$, the relative $L^{\infty}$  error is 2.51\% and the relative $L^2$ error is 1.13\%. 

\begin{figure}[!htb]
  \centering
  \begin{minipage}{0.48\textwidth}
    \includegraphics[width=\linewidth]{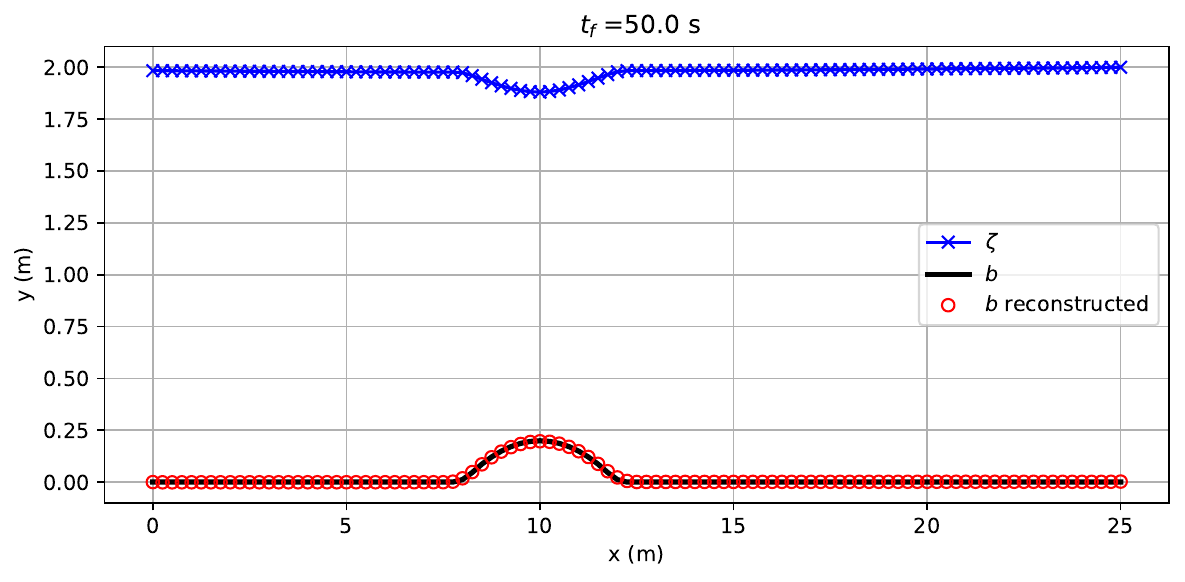}
  \end{minipage}\hfill
  \begin{minipage}{0.48\textwidth}
    \includegraphics[width=\linewidth]{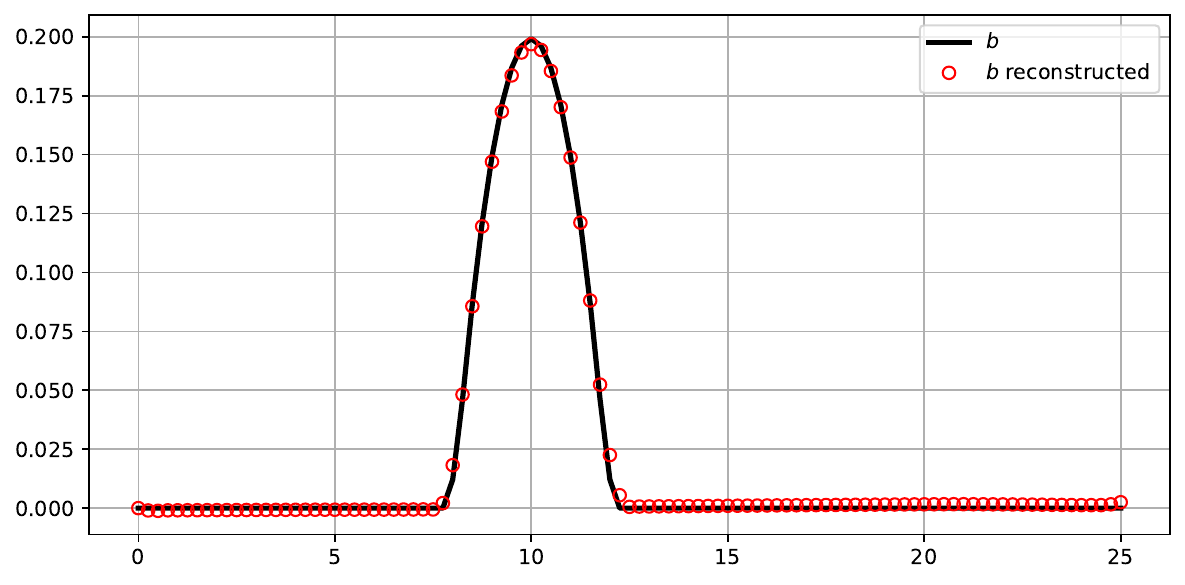}
  \end{minipage}

  \caption{Comparison of the reconstructed bed topography and true bathymetry.}
  \label{fig:surface recon bottom test2}
\end{figure}

In both tests, we showed that the sufficient conditions \eqref{condition on bottom} and \eqref{condition on bottom modified} remain valid and adaptable when considering subcritical flows. Moreover, we demonstrated that the approach provides high reconstruction accuracy. Before extending our study to other types of flow, it should be mentioned that the results of the above two tests also apply to different bed shapes, such as the Gaussian bump examined in \cite{nicholls2009detection,khan2021variational}. For brevity, we do not present these supplementary simulations here; interested readers can produce the results using \cite{codepyhon}.\\

\noindent  {\bf \em Test 3: Unsteady supercritical flow over a sandbar } \vspace*{0.25cm}
$\quad$\\
Assuming a supercritical flow ($\mathrm{Fr}>1$) automatically implies the nondegeneracy assumption \eqref{condition}, eliminating the need for supplementary sufficient conditions  \eqref{condition on bottom} and \eqref{condition on bottom modified}. The Froude number ($\mathrm{Fr}$) is defined by 
\begin{equation}\label{Fr}
    \mathrm{Fr}(t,x)=\frac{u(t,x)}{\sqrt{g\,h(t,x)}},\quad \forall (t,x)\in (0,t_f)\times I.
\end{equation}
In this test, we consider global supercritical flow. Precisely
\begin{equation}\label{asumption froude}
    \mathrm{Fr}(t,x)>1,\quad \forall (t,x)\in (0,t_f)\times I.
\end{equation}
However, similar analysis can be applied to the strong supercritical flow \cite{huang2011one} and the strengthened supercritical flow \cite{kounadis2020galerkin}, since all these flow hypotheses lead to $u(t,x)>0$ for every $t\in (0,t_f)$ and $x\in I$. In order to produce the surface data required for our analytical approach, we adopt the sandbar bed profile  \cite{nicholls2009detection,khan2021variational}. For a given $A\in\mathbb{R}_{+}^{\star}$, the sandbar bathymetry is defined by
\begin{equation}\label{sandbar}
   b(x)= A\left(\tanh\left(2\left(\frac{6\,x}{25}-3+B\right)\right)- \tanh\left(2\left(\frac{6\,x}{25}-3-B\right)\right)   \right),\quad \forall x\in I,
\end{equation}
where $B=\frac{3\pi}{5}$. At the upstream boundary, we prescribe a constant inlet depth  and a time‑periodic inlet discharge:
\begin{equation*}
    q(t,0)=2.5+0.3\sin\left(\frac{2\pi\,t}{10}\right) \,\text{m}^3/s \quad \text{and}\quad h(t,0)=0.6\,\text{m}.
\end{equation*}
The last step before executing the forward solver is to define the initial state. Here, we consider the following non‑stationary initial conditions
\begin{equation*}
     q(0,x)=2.5\,\text{m}^3/s,  \quad \text{and}\quad \zeta(0,x)=0.6+0.15\,e^{-\left(\frac{x-5}{1.2}\right)^2}\,\text{m}.
\end{equation*}
In the following figures of this test, we fix the sandbar parameter \eqref{sandbar} to $a=0.1$. Numerical tests indicate that the flow remains supercritical and meets condition \eqref{supercritical} until $t_f=12\,s $; beyond $t_f=13\,s$, global supercriticality \eqref{supercritical} is lost.  However, the nondegeneracy assumption \eqref{condition} holds for any $t_f$. Without loss of generality, we chose $t_f=12\,s$. For the spatial discretization, we keep the same grid step $\Delta x=\frac{25}{100}$. 

Figure \ref{fig: Fr} presents the Froude number distribution along the channel at $t_f$, clearly demonstrating supercritical flow at $t_f$. Although the flow is globally supercritical up to $12\,s$, we recall that our approach only requires a single time for which strict positivity \eqref{condition} holds. Hence, we only focus on $t_f$.

\begin{figure}[!htb]
  \centering
  \includegraphics[width=0.7\textwidth]{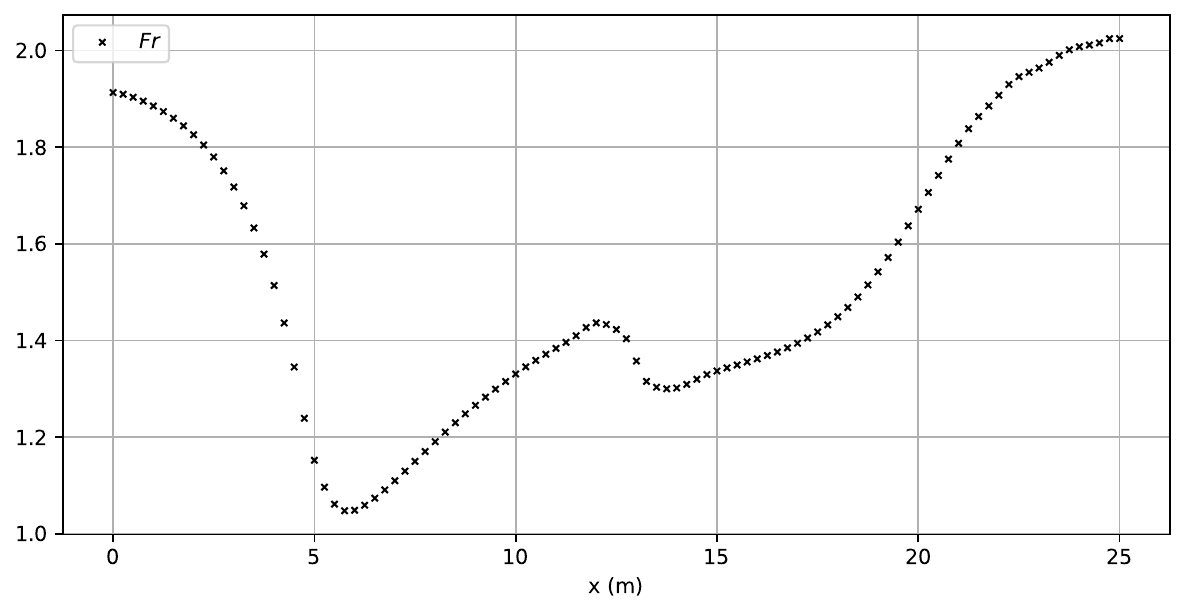}
  \caption{  Spatial profile of the Froude number \eqref{Fr} at $t_f=12\,s$.}
  \label{fig: Fr}
\end{figure}

Using the generated surface measurements \eqref{surafcemeasurement} and the inlet discharge, we reconstruct the spatial discharge profile  \eqref{invalgo2}. Figure \ref{fig:direct and inverse discharge2} presents the forward‑computed discharge and the discharge reconstructed from surface measurements at $t_f=12\,s$.

\begin{figure}[!htb]
  \centering
  \includegraphics[width=0.7\textwidth]{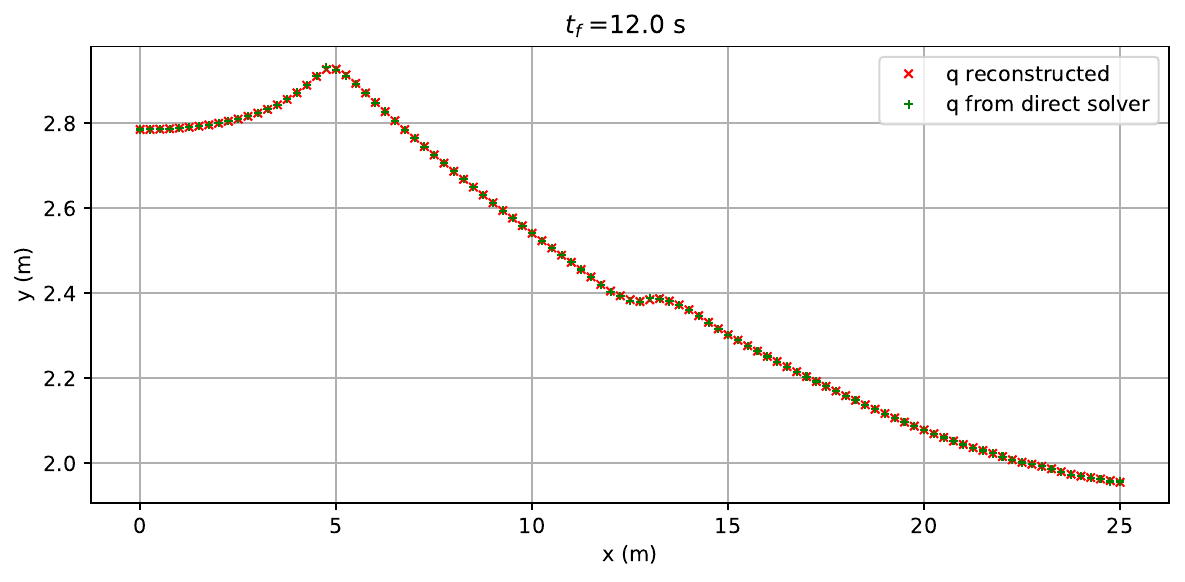}
  \caption{ Comparison of the forward and reconstructed discharge profiles.}
  \label{fig:direct and inverse discharge2}
\end{figure}

We now illustrate the accuracy of our approach to reconstruct the bottom profile. In Figure \ref{fig:surface recon bottom test3}, it can be seen that the bed profile is very well recovered. Moreover, the relative $L^{\infty}$ error is 1.28\% and the relative $L^2$ error is 0.64\%. Importantly, for a grid resolution of $N_x=100$, the error of estimation achieved reflects excellent reconstruction performance.

\begin{figure}[!htb]
  \centering
  \begin{minipage}{0.48\textwidth}
    \includegraphics[width=\linewidth]{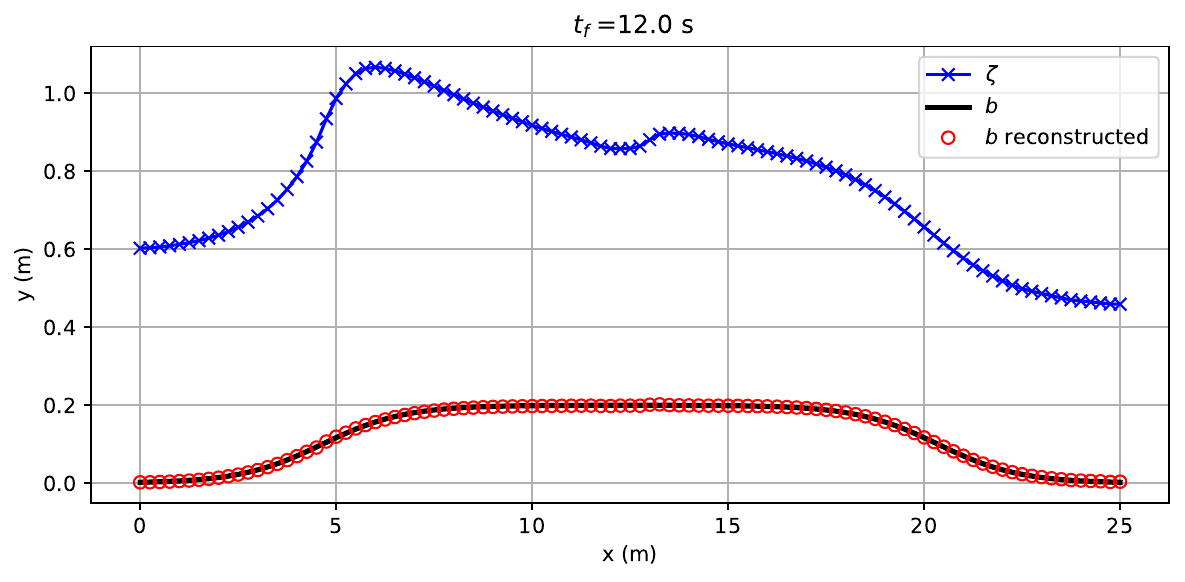}
  \end{minipage}\hfill
  \begin{minipage}{0.48\textwidth}
    \includegraphics[width=\linewidth]{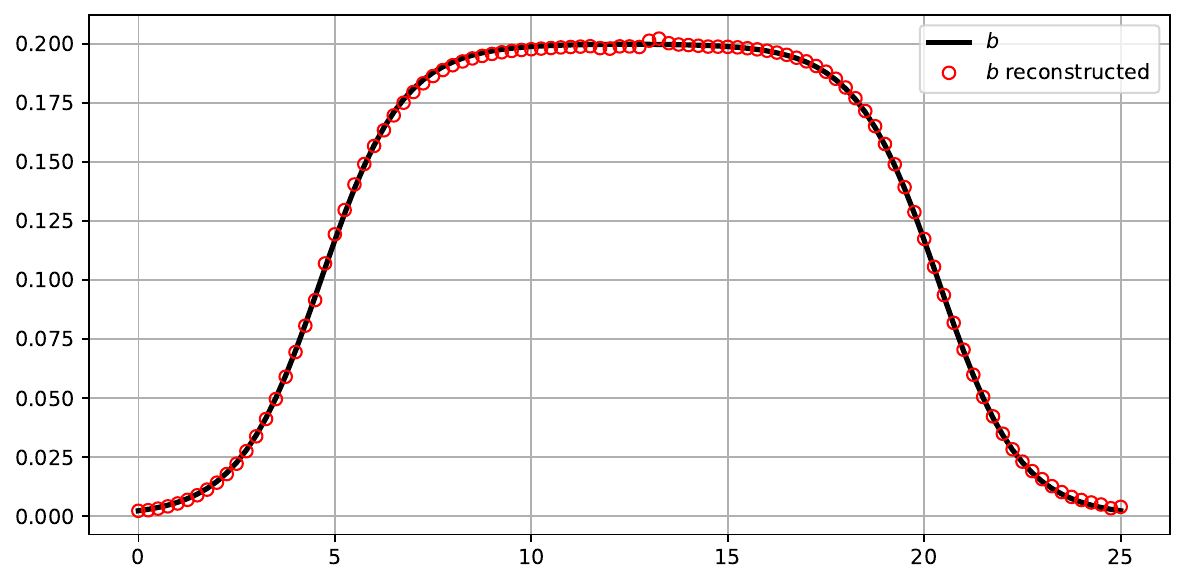}
  \end{minipage}

  \caption{Comparison of the reconstructed bed topography and true bathymetry.}
  \label{fig:surface recon bottom test3}
\end{figure}

At the end of this test, it should be mentioned that the reconstruction accuracy remains consistent across all time instants $t\in(0,12)$ and even for $t\geq 13$ when hypothesis \eqref{asumption froude} is no longer valid. Furthermore, an equivalent accuracy is achieved for different values of the sandbar amplitude $A$ \eqref{sandbar}, which underlines the flexibility of our method.

\noindent  {\bf \em Test 4:  Hydraulic fall of unsteady transcritical flow over a Gaussian topography} \vspace*{0.25cm}
$\quad$\\
In this test, we examine an unsteady hydraulic fall scenario \cite{tam2015predicting}. Consistent with our earlier tests, the bottom topography, inlet/outlet conditions, and initial state are specified to produce the single time measurements \eqref{surafcemeasurement}. We recall that our bathymetry‑reconstruction approach does not rely on downstream observations or particular initial data; it only requires inlet discharge, upstream bottom value, and the free surface measurements given in \eqref{surafcemeasurement}.  We simulate the hydraulic fall over the following Gaussian bottom profile \cite{nicholls2009detection}:
\begin{equation*}
    b(x)=0.2\,\sech\left(2\left(\frac{6\,x}{25}-3\right)\right),\quad\forall x\in I.
\end{equation*}
We initialize the flow with
\begin{equation*}
     q(0,x)=1.5\,\text{m}^3/s,  \quad \text{and}\quad \zeta(0,x)=0.7\,\text{m}.
\end{equation*}
In order to avoid any steady‑state constraint, the inlet discharge is specified as follows
\begin{equation*}
     q(t,0)=1.5+0.2\sin\left(\frac{2\pi\,t}{10}\right)\,\text{m}^3/s.
\end{equation*}

With $\Delta x=\frac{25}{100}$, we run the forward solver \eqref{algo1}-\eqref{algo10} up to $t_f=20\,s$. From the obtained surface measurements \eqref{surafcemeasurement} and the given inlet flow rate, we reconstruct the discharge profile over the domain. As shown in  Figure \ref{fig:direct and inverse discharge3}, the recovered discharge aligns with the direct‑solver output discharge. In Figure \ref{fig: Fr2}, we plot the Froude number at $t_f$, revealing a single crossing of unity and thus confirming that the flow is transcritical.

\begin{figure}[!htb]
  \centering
  \includegraphics[width=0.7\textwidth]{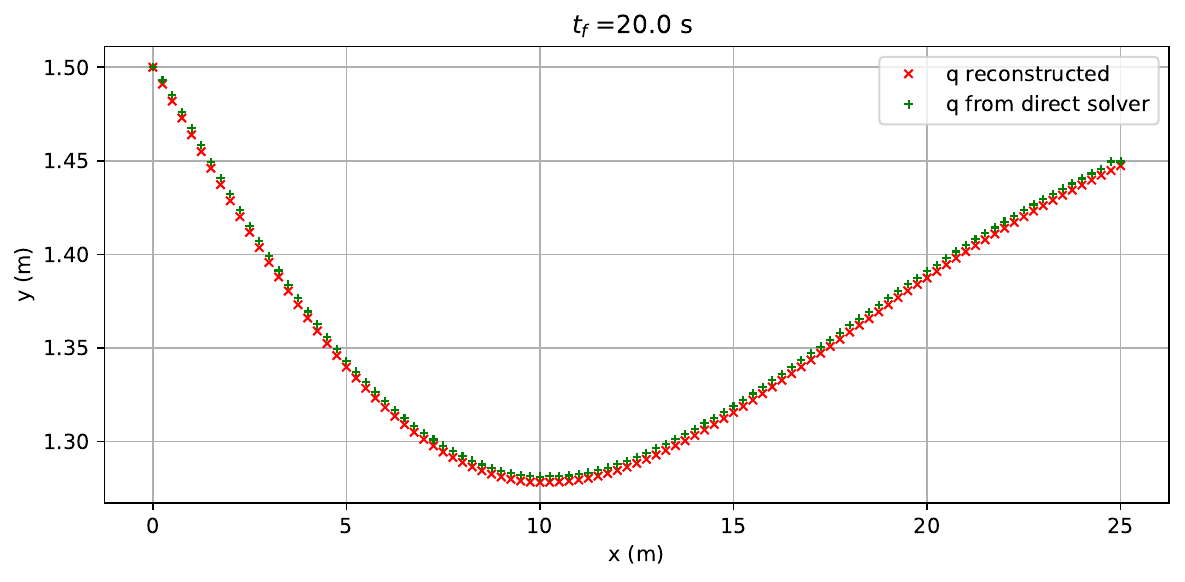}
  \caption{ Comparison of the forward and reconstructed discharge profiles.}
  \label{fig:direct and inverse discharge3}
\end{figure}

\begin{figure}[!htb]
  \centering
  \includegraphics[width=0.7\textwidth]{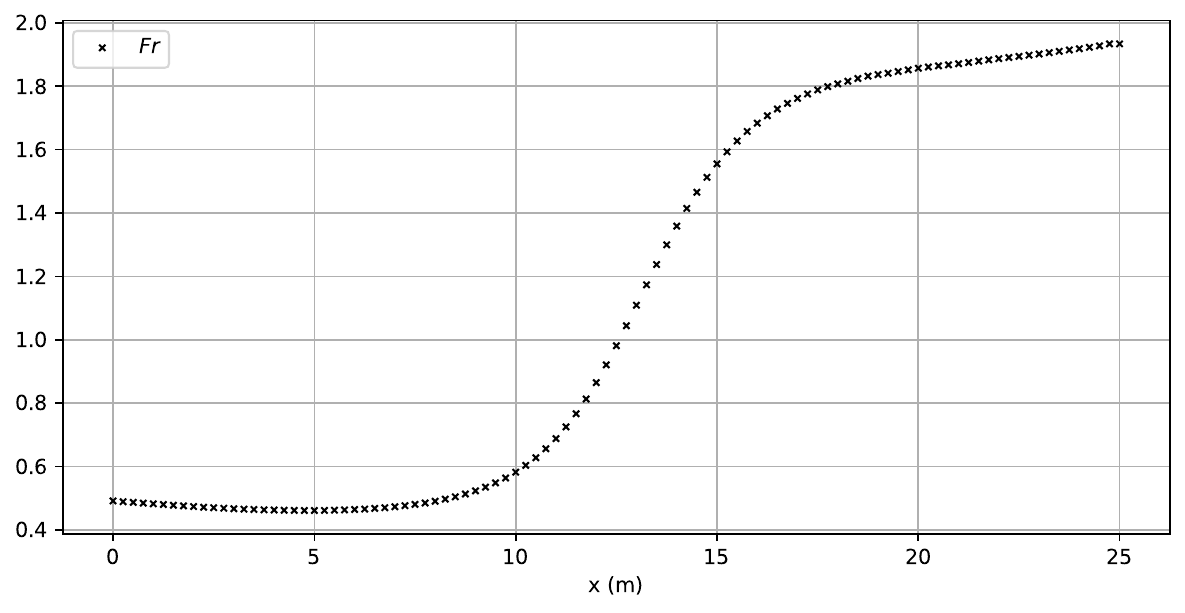}
  \caption{  Spatial profile of the Froude number \eqref{Fr} at $t_f=20\,s$.}
  \label{fig: Fr2}
\end{figure}
Applying the algorithm \eqref{invalgo1}–\eqref{invalgo5} produces an estimated bottom profile whose shape matches the true bathymetry, as shown in Figure \ref{fig:surface recon bottom test4}. Quantitatively, we measure a relative $L^{\infty}$ error of 1.88 \% and a relative $L^2$ error of 2.46 \%. Spatial refinement further reduces these small errors: with $N_x=200$, the relative $L^{\infty}$error drops to 0.93 \% and the relative $L^2 $ error to 1.21 \%, reflecting the robustness and precision of the method.
\begin{figure}[!htb]
  \centering
  \begin{minipage}{0.48\textwidth}
    \includegraphics[width=\linewidth]{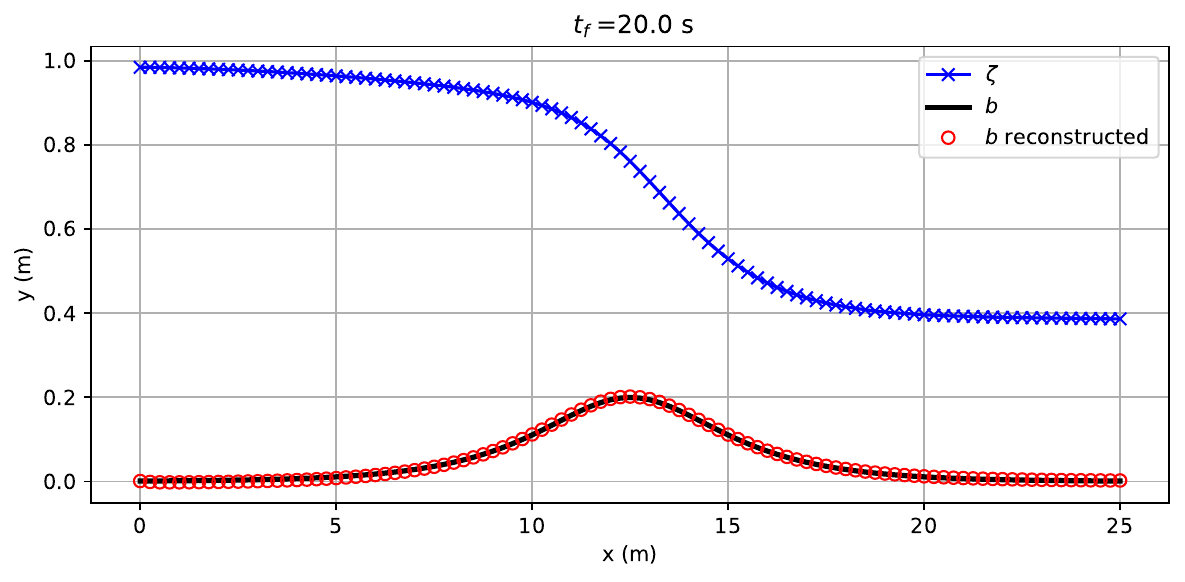}
  \end{minipage}\hfill
  \begin{minipage}{0.48\textwidth}
    \includegraphics[width=\linewidth]{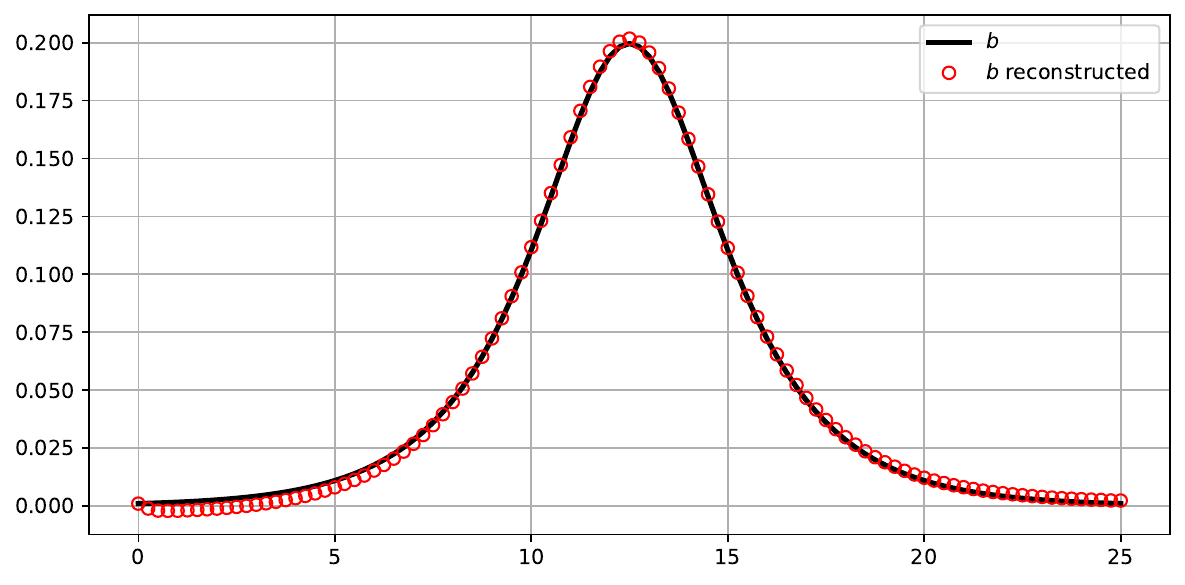}
  \end{minipage}

  \caption{Comparison of the reconstructed bed topography and true bathymetry.}
  \label{fig:surface recon bottom test4}
\end{figure}

Regarding Corollary \ref{transflow}, although Figure \ref{fig:direct and inverse discharge3} already indicates a strictly positive discharge, we carry out a numerical verification of the sufficient condition \eqref{condition on bottom modified} in order to rigorously confirm nondegeneracy. Our computation shows that this sufficient condition is met for $\rho =0.2$.\\

\noindent  {\bf \em Test 5: Impact of noisy free surface on stability} \vspace*{0.25cm}
$\quad$\\
In practice,  free surface data are noisy; therefore, analyzing the response of our approach to such noise is important for the validation. Although Proposition \ref{propositionstability} ensures stability in theory, the associated exponential factor grows with the Lipschitz constant $\bigl\|\partial_x \zeta\bigr\|_{L^{\infty}(I)}$, which tends to be large for noisy data, see Remark \ref{stability problm}. Consequently, noisy free surface measurements yield larger Lipschitz constants and hence larger stability bounds. Here, we show that applying a cubic spline filter \cite{reinsch1967smoothing}  to smooth the noisy free surface effectively controls this Lipschitz constant \eqref{C11} and enables accurate bathymetry reconstruction.

Unlike the previous examples, this test uses a non-standard bed geometry consisting of two bumps \eqref{bottombup} and a cosine function. To generate the surface data \eqref{surafcemeasurement}, we use the stationary initial conditions given in \eqref{initila2} and the downstream depth \eqref{boundarydatat1}. For the inlet discharge, we introduce a time‐dependent perturbation to the upstream discharge in  \eqref{boundarydatat1}. Precisely
$$q(t,0)=4.42+0.2\sin\left(\frac{2\pi \,t}{10}\right)\;\,\text{m}^3/s .$$

Figure \ref{fig:surface recon bottom test5} compares the reconstructed bathymetry to the actual profile for $t_f=40\,s$ and $N_x=100$. The bed profile is recovered with a relative $L^{\infty}$ error of 5.78 \% and a relative $L^2$ error of 3.06 \%, demonstrating the good accuracy of the approach to recover a wide variety of bed geometries.

\begin{figure}[!htb]
  \centering
  \begin{minipage}{0.48\textwidth}
    \includegraphics[width=\linewidth]{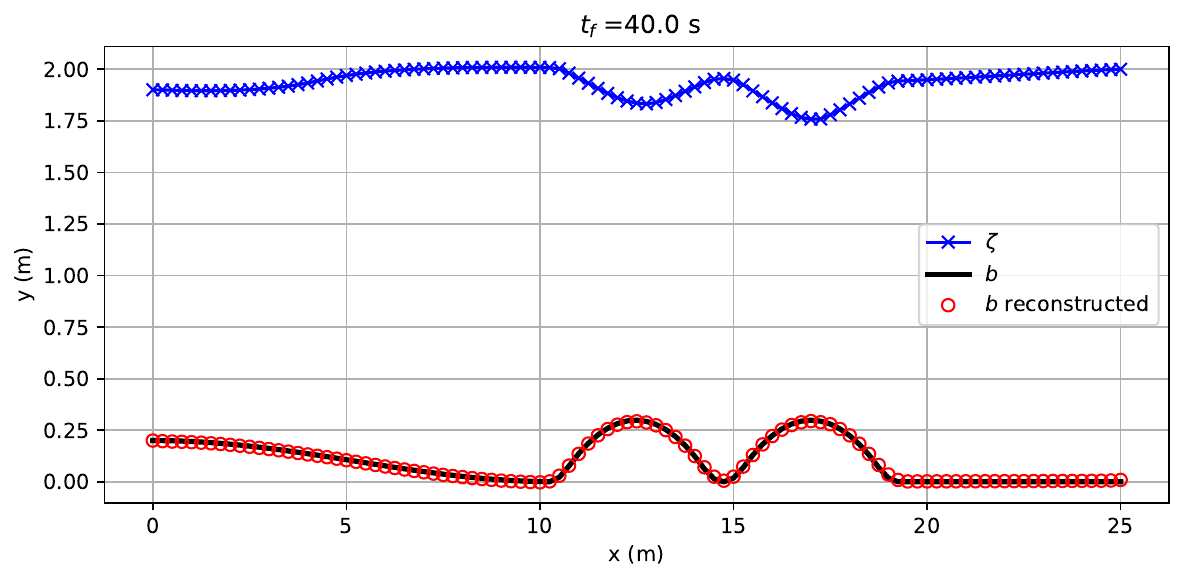}
  \end{minipage}\hfill
  \begin{minipage}{0.48\textwidth}
    \includegraphics[width=\linewidth]{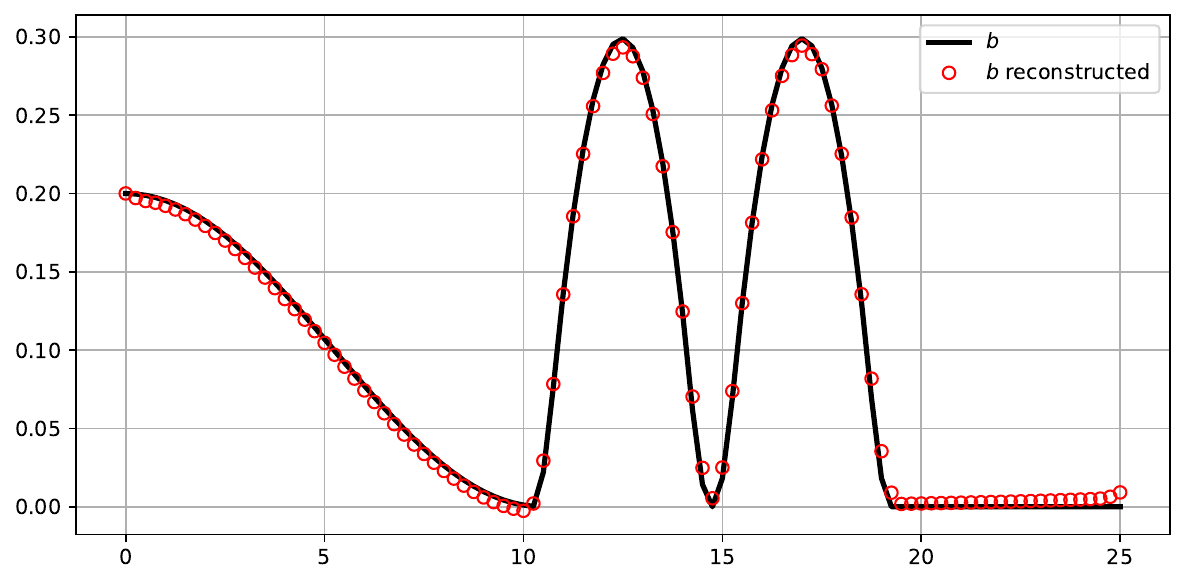}
  \end{minipage}

  \caption{Comparison of the reconstructed bed topography and true bathymetry.}
  \label{fig:surface recon bottom test5}
\end{figure}

Prior to adding noise, it should be mentioned that this test closely mirrors {\bf \em Test 2}. In particular, we find that the strong subcriticality hypothesis  \eqref{subcriticalflow} becomes valid only after a time shift $t_s$. However, unlike in {\bf \em Test 2}, beyond this instant, only the second sufficient condition \eqref{condition on bottom modified} holds. This is because, in this test, the downstream characteristic speed $c_2$ is small even after shift (equal to $0.2$) and cannot control the bottom profile variation in \eqref{condition on bottom} (see Remark \ref{problem in condition}).

In the following, we introduce 2 \% of noise based on the depth of the flow to the free surface. Figure \ref{fig:noise1} illustrates the noisy free surface alongside the exact profile. Injecting this noisy surface directly into \eqref{invalgo1}-\eqref{invalgo5} results in large approximation errors, as highlighted in Remark \ref{stability problm}. Therefore, we first smooth the noisy free surface using a cubic‑spline smoother \cite{reinsch1967smoothing}, see Figure \ref{fig:noise1}. This smoothed surface is then used in \eqref{invalgo1}-\eqref{invalgo5}.  In Figure \ref{noise2}, the exact bottom elevation and the reconstructed profile from the smoothed data are compared, demonstrating that our method recovers the bathymetry despite the initial noise. Moreover, even with the 2 \% of noise, reconstruction accuracy remains high: the relative $L^{\infty}$ error increases to 10 \% and the relative $L^2$ error to 8 \%, corresponding only to 5 \% decay compared to the noise‑free case. Although the error magnitudes depend implicitly on $N_x$, the smoothing parameter, and noise shape, the bed topography is generally well reconstructed.

\begin{figure}[!htb]
  \centering
  \begin{minipage}{0.48\textwidth}
    \includegraphics[width=\linewidth]{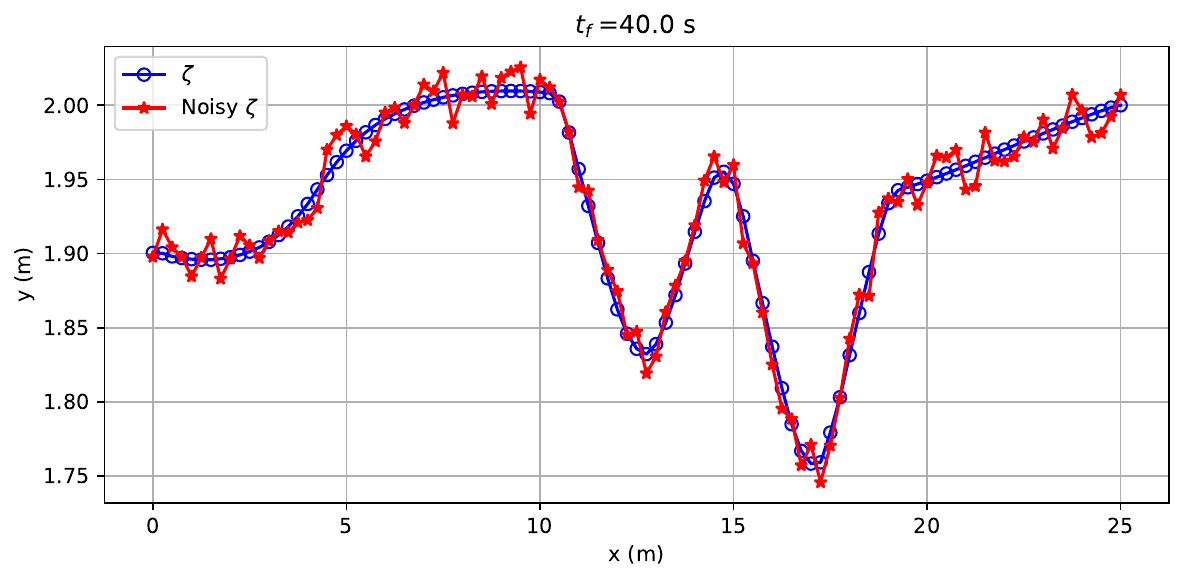}
  \end{minipage}\hfill
  \begin{minipage}{0.48\textwidth}
    \includegraphics[width=\linewidth]{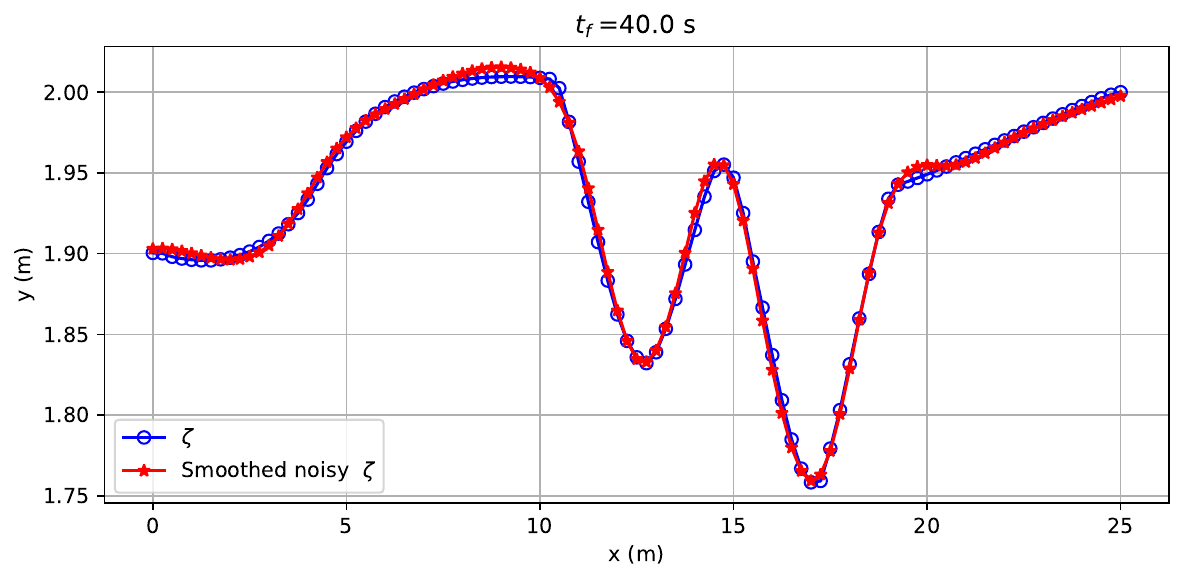}
  \end{minipage}

  \caption{ Exact, noisy, and cubic‑Spline–smoothed free surface profiles}
  \label{fig:noise1}
\end{figure}

\begin{figure}[!htb]
  \centering
  \begin{minipage}{0.48\textwidth}
    \includegraphics[width=\linewidth]{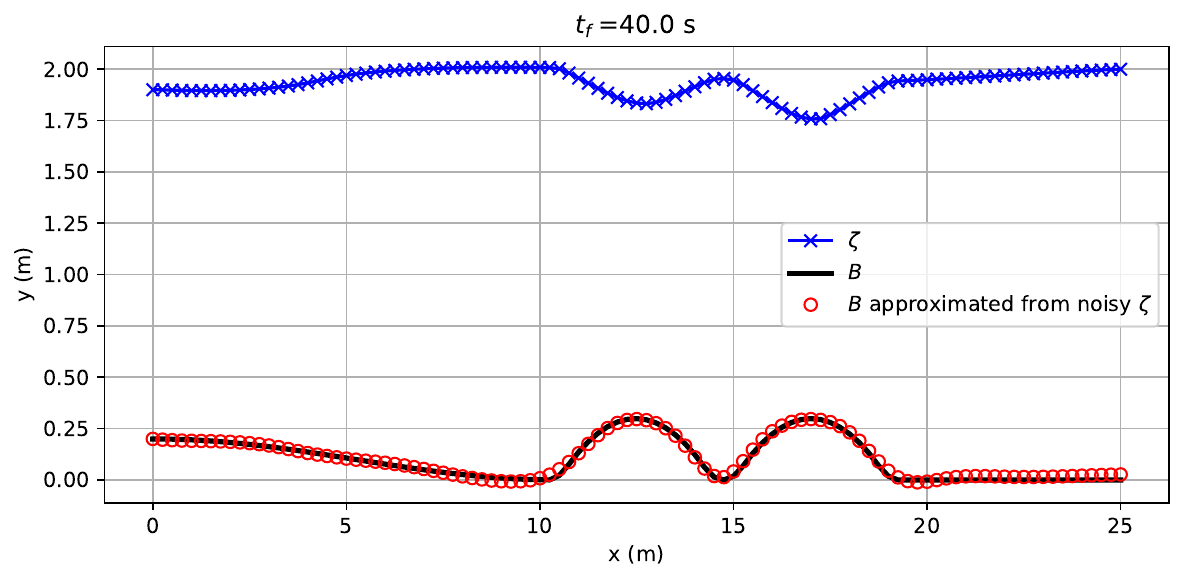}
  \end{minipage}\hfill
  \begin{minipage}{0.48\textwidth}
    \includegraphics[width=\linewidth]{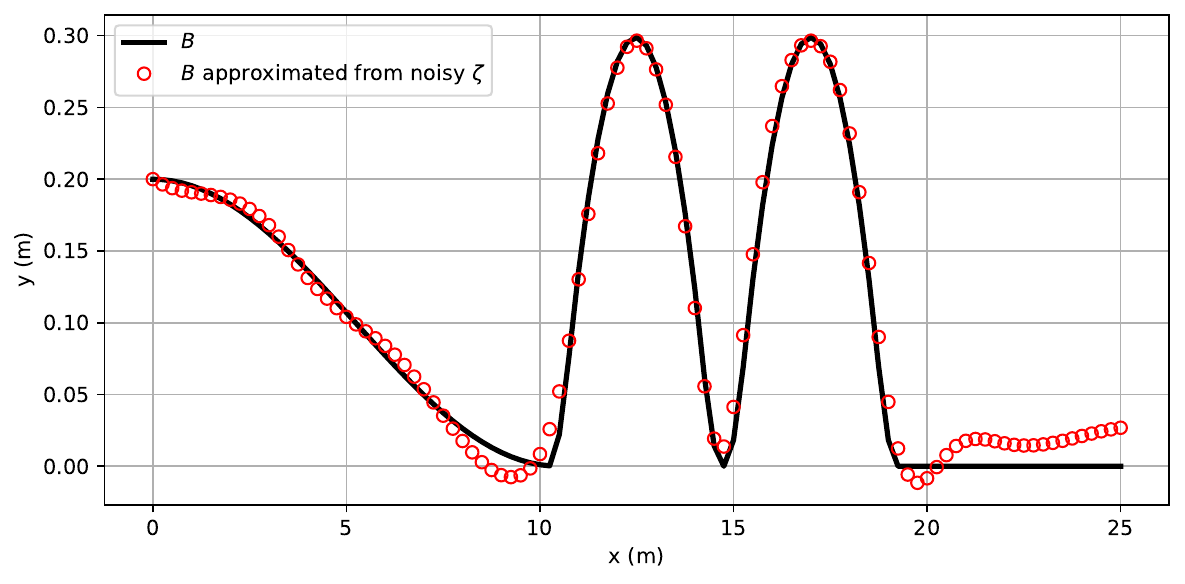}
  \end{minipage}

  \caption{  Comparison of the true and reconstructed bathymetry from noisy free surface}
  \label{noise2}
\end{figure}

At the end of this subsection, we emphasize that our reconstruction method delivers comparable performance under different settings, such as the hypothetical hydrograph inlet discharge considered in \cite{sanders2000adjoint}, as well as for more complex bottom profiles. For any further numerical simulations, we encourage readers to utilize \cite{codepyhon}.

\section{Conclusions, further comments, and future work}\label{sec:conclusions}
We have provided a direct, stable, and analytic approach to infer the bed topography using the one‑dimensional shallow water equations in both steady and unsteady flows. The technique requires an open domain setting and knowledge of the free surface elevation and its first two time derivatives over the spatial domain at a single time $t^{\star}$ \eqref{surafcemeasurement}, as well as the flow rate and the bed profile at the upstream boundary, see Figure \ref{fig:inverse_domain}. This time  $t^{\star}$ should satisfy the nondegeneracy assumption \eqref{condition}. We illustrated that this condition automatically holds for supercritical flow. To extend this guarantee to subcritical and transcritical flows, we established two sufficient conditions \eqref{condition on bottom} and \eqref{condition on bottom modified} that connect the bottom profile variation with the speed of propagation and inlet/outlet data, via the Riemann characteristics.  We further proved a Lipschitz stability result, demonstrating that the bottom profile reconstruction depends continuously on the given measurement data, thus quantifying the method's robustness.

In Subsection \ref{testn},  five numerical tests have been detailed for several bottom profiles and inlet discharges, confirming both the nondegeneracy assumption \eqref{condition} and its sufficient conditions \eqref{condition on bottom}-\eqref{condition on bottom modified}.  In all these numerical experiments, the relative $L^{\infty} $ and $L^{2}$ errors \eqref{norms} were computed to rigorously quantify the accuracy of the method. We found that, in all tests, the bed profile is reconstructed with low relative errors. Even when subject to a free surface with noise or a free surface with a high Lipschitz constant $\bigl\|\partial_x \zeta\bigr\|_{L^{\infty}(I)}$, the method remains robust and the bottom profile is well recovered.

The work methodology exploits the exact continuity equation  \eqref{sh4} to recover the spatial discharge profile from the first time derivative of the free surface and the inlet discharge \eqref{invalgo2}. Since this equation follows directly from the full Euler equations without relying on any asymptotic regime, it applies beyond the classical shallow‑water approximation and can be adapted to deeper‑water models that enforce exact mass conservation \cite{coulaud2025comparison}.  Therefore, the methodology has the potential to extend to deeper‑water regimes. In particular, the analytic solution in the steady‑state scenario \eqref{sh5}-\eqref{simpliciation for discharge} also holds for the Peregrine model \cite{peregrine1967long}, because under the assumption of time independence ($\partial_t=0$), this system is equivalent to the shallow water system. This demonstrates that our analytic reconstruction \eqref{sh5}-\eqref{simpliciation for discharge} remains valid in intermediate‑depth.

A promising direction for future work is to extend the work approach to open‐channel flows that admit no stagnation points. Specifically, by relaxing the nondegeneracy assumption  \eqref{condition} to a no‑stagnation requirement, that is,  for every spatial location $x\in I$  there exists a time $t(x)\in (0,t_f)$ such that $\bigl|q(t(x),x)\bigr|>0.$ Developing a reconstruction algorithm based on the inverse system \eqref{sh4} and proving its convergence and stability is of interest.  However, it should be mentioned that this direction may require different time intervals of measurement of the free surface data $S_m$ \eqref{surafcemeasurement}.

\section*{Acknowledgements}
The authors thank Professor Lionel Rosier (ULCO) for his helpful discussion and valuable remarks, which improved this work. The work received France 2030 funding under the reference ‘ANR-23-EXMA-0007'.
\section*{Appendix}
\appendix
\section{Proof of stability}\label{sec:A}
Here, we provide the details for the proof of Proposition \ref{propositionstability}. We recall that, for simplicity,  we omit in the following the explicit dependence on $t^{\star}$.\\
Let $t^{\star}\in (0,t_f)$, such that \eqref{condition} holds. From \eqref{exactsystem} and \eqref{errorsystem}, we have 

\begin{equation}\label{st1}
\begin{cases}
\displaystyle \partial_x \phi=-g \frac{\partial_x \zeta\, q^2}{\phi}\,-\,\partial_t q,\quad \text{in}\;I,\\[6pt]
\displaystyle \partial_x \widetilde{\phi}=-g \frac{\partial_x \widetilde{\zeta}\, (\widetilde{q})^{2}}{\widetilde{\phi}}\,-\,\partial_t\widetilde{ q},\quad \text{in}\; I.
\end{cases}
\end{equation}
Subtracting the equations of the above system \eqref{st1}, we derive 

\begin{equation*}
\begin{aligned}
\partial_x(\phi-\widetilde{\phi}\,)&=-g\,\partial_x\zeta\, q^2\left(\frac{1}{\phi}-\frac{1}{\widetilde{\phi}}\right)+g\,\frac{1}{\widetilde{\phi}}\left(\partial_x\widetilde{\zeta}\,(\widetilde{q})^2-\partial_x\zeta\, q^2\right)-\left(\partial_t q-\partial_t\widetilde{ q}\right)\\
&=-g\,\partial_x\zeta\frac{h\,\widetilde{h}}{\widetilde{q}^2}(\widetilde{\phi}-\phi)+g\,\frac{\widetilde{h}}{\widetilde{q}^2}\biggl(\partial_x\widetilde{\zeta}(\widetilde{q}^2-q^2)-q^2(\partial_x\zeta-\partial_x\widetilde{\zeta}\,)\biggr)-\left(\partial_t q-\partial_t\widetilde{ q}\right).
\end{aligned}
\end{equation*}
Hence,
\begin{equation}\label{st2}
\begin{aligned}
\left|\partial_x(\phi-\widetilde{\phi}\,)\right|&\leq  g \left|\partial_x \zeta\right|\,\frac{h\,\widetilde{h}}{\beta^2}\left|\phi-\widetilde{\phi}\right|+g\frac{\widetilde{h}}{\beta^2}\biggl( \left|\partial_x\widetilde{\zeta}\right| \left| q+\widetilde{q}\right|\left| q-\widetilde{q}\right|+q^2\left| \partial_x\zeta -\partial_x\widetilde{\zeta}\right|  \biggr)   +\left|\partial_t q-\partial_t\widetilde{ q} \right|.               
\end{aligned}
\end{equation}
For ease of reading, we adopt the following notations
$$\bigl\lVert \cdot \bigl\lVert_{L^1(I)}=\bigl\lVert \cdot \bigl\lVert_{1}\quad\text{and}\quad \bigl\lVert \cdot \bigl\lVert_{L^{\infty}(I)}=\bigl\lVert \cdot \bigl\lVert_{\infty}. $$
Employing this notation and integrating the above inequality \eqref{st2} over $(a_1,x)$ for any $x\in I$, we get

\begin{equation}\label{st3}
\begin{aligned}
\displaystyle\int_{a_1}^x  \bigl| \partial_x(\phi -\widetilde{\phi}\,) \bigl| \,{\dd}y &\leq \frac{g}{\beta^2}\bigl\|\partial_x\zeta\bigl\|_{\infty}  \bigl\|h\,\widetilde{h}\bigl\|_{\infty}  \displaystyle\int_{a_1}^x\bigl|\phi-\widetilde{\phi}\bigl|\,{\dd}y  \, \\&\quad\quad+\frac{g}{\beta^2}\bigl\|\widetilde{h}\bigl\|_{\infty}\biggl( \bigl\|\partial_x\widetilde{\zeta}(q+\widetilde{q}\,)\bigl\|_{\infty}\,\bigl\|q-\widetilde{q}\bigl\|_1+\bigl\|q^2\bigl\|_{\infty}\bigl\|\partial_x(\zeta-\widetilde{\zeta}\,)\bigl\|_1\biggr)\\
&\quad\quad\quad+\bigl\|\partial_t q-\partial_t\widetilde{ q}\bigl\|_1.
\end{aligned}
\end{equation}
By substituting the constants in  \eqref{stabilitycanst}, the preceding inequality \eqref{st3} is equivalent to

\begin{equation*}
\begin{aligned}
\displaystyle\int_{a_1}^x  \bigl| \partial_x(\phi -\widetilde{\phi}\,) \bigl| \,{\dd}y &\leq C_2  \displaystyle\int_{a_1}^x\bigl|\phi-\widetilde{\phi}\bigl|\,{\dd}y+C_3\bigl\|q-\widetilde{q}\bigl\|_1+C_4\bigl\|\partial_x(\zeta-\widetilde{\zeta}\,)\bigl\|_1+\bigl\|\partial_t q-\partial_t\widetilde{ q}\bigl\|_1.
\end{aligned}
\end{equation*}
On the other hand, since $\phi$ and $\widetilde{\phi}$  are in $W^{1,1}(I)$, we have
$$\bigl|\phi(x)-\widetilde{\phi}(x)\bigr|\leq \bigl|\phi(a_1)-\widetilde{\phi}(a_1)\bigr|+\displaystyle\int_{a_1}^x  \bigl| \partial_x(\phi -\widetilde{\phi}\,) \bigr| \,{\dd}y,\quad \forall x\in I.$$
Then, for any $x\in I$, we get
$$\bigl|\phi(x)-\widetilde{\phi}(x)\bigr|\leq \bigl|\phi(a_1)-\widetilde{\phi}(a_1)\bigr|+C_3\bigl\|q-\widetilde{q}\bigl\|_1+C_4\bigl\|\partial_x(\zeta-\widetilde{\zeta}\,)\bigl\|_1+\bigl\|\partial_t q-\partial_t\widetilde{ q}\bigl\|_1+C_2  \displaystyle\int_{a_1}^x\bigl|\phi-\widetilde{\phi}\bigl|\,{\dd}y.$$
With algebraic manipulation, we can obtain the following
\begin{equation*}
\begin{aligned}
\bigl|\phi(a_1)-\widetilde{\phi}(a_1)\bigr|&=\left|\frac{q^2(a_1)}{h(a_1)}-\frac{\widetilde{q}^2(a_1)}{\widetilde{h}(a_1)}\right|\\
&=\frac{1}{h(a_1)\widetilde{h}(a_1)}\bigl|\widetilde{h}(a_1)q^2(a_1)-h(a_1)\widetilde{q}^2(a_1)\bigr|\\
&=\frac{1}{h(a_1)\widetilde{h}(a_1)}\left|q^2(a_1)\left(\widetilde{h}(a_1)-h(a_1)\right)+h(a_1)\left(q^2(a_1)-\widetilde{q}^2(a_1)\right)\right|\\
&\leq \frac{q^2(a_1)}{h(a_1)\widetilde{h}(a_1)}\Bigl( \bigl|b(a_1)-\widetilde{b}(a_1)\bigr|+\bigl|\zeta(a_1)-\widetilde{\zeta}(a_1)\bigr| \Bigr)+\frac{q(a_1)+\widetilde{q}(a_1)}{\widetilde{h}(a_1)}\bigl|q(a_1)-\widetilde{q}(a_1)\bigr|.
\end{aligned}
\end{equation*}
Hence, the definition of $E$ in \eqref{E} yields
$$\bigl|\phi(x)-\widetilde{\phi}(x)\bigr|\leq E+C_2 \displaystyle\int_{a_1}^x\bigl|\phi-\widetilde{\phi}\bigl|\,{\dd}y,\quad \forall x\in I.$$
Applying Grönwall’s inequality \cite{evans2022partial}, we obtain
\begin{equation}\label{Granwall}
\displaystyle \big\|\phi -\widetilde{\phi}\bigr\|_1\leq \frac{E}{C_2}\Bigl(e^{C_2(a_2-a_1)}-1\Bigr).
\end{equation}
We now use the definitions of $\phi$ and $\widetilde{\phi}$  to transform the above stability \eqref{Granwall} into a stability estimate in terms of the bathymetry profile $b$. We first link the bottom profile error to the error on $\phi$ as follows:
\begin{equation*}
\begin{aligned}
\bigl| b-\widetilde{b}\bigr|&=\bigl|(h-\widetilde{h})-(\zeta-\widetilde{\zeta})\bigr|\\
&\leq \bigl|h-\widetilde{h}\bigr|+ \bigl|\zeta-\widetilde{\zeta}\bigr|\\
&=\frac{1}{\phi\,\widetilde{\phi}}\bigl|q^2 \widetilde{\phi}-\widetilde{q}^2 \phi\bigr|+\bigl|\zeta-\widetilde{\zeta}\bigr|\\
&\leq \frac{h\,\widetilde{h}}{\widetilde{q}^2}\bigl|\widetilde{\phi}-\phi\bigr|+\frac{\widetilde{h}}{\widetilde{q}^2}\bigl|q^2-\widetilde{q}^2\bigr|+\bigl|\zeta-\widetilde{\zeta}\bigr|.
\end{aligned}
\end{equation*}
By integration over $I$, we get
\begin{equation}\label{st4}
    \bigl\|b-\widetilde{b}\bigr\|_1\leq \frac{\bigl\|h\,\widetilde{h}\bigr\|_{\infty}}{\beta^2}\bigl\|\phi-\widetilde{\phi}\bigr\|_1+\frac{1}{\beta^2}\bigl\|\widetilde{h}(q+\widetilde{q})\bigr\|_{\infty}\bigl\|q-\widetilde{q}\bigr\|_1+\bigl\|\zeta-\widetilde{\zeta}\bigr\|_1.
\end{equation}
Injecting \eqref{Granwall} into the above inequality \eqref{st4}  and using the definition of $C_1$ in \eqref{stabilitycanst}, results in \eqref{stability}, thus completing the proof.


\end{document}